\documentclass[12pt]{article}
\pdfoutput=1
\usepackage{jheppub}
\usepackage{mathtools,amsmath,amssymb,amsthm,physics,bm,float}
\usepackage[utf8]{inputenc}
\usepackage{graphicx}
\usepackage{enumitem}
\usepackage{hyperref}

\begin{document}

\DeclarePairedDelimiter{\ceil}{\lceil}{\rceil}
\newtheorem{theorem}{Theorem}
\newtheorem{lemma}[theorem]{Lemma}
\newtheorem{corollary}[theorem]{Corollary}
\newcommand{\be}{\begin{equation}}
\newcommand{\ee}{\end{equation}}
\newcommand{\bi}{\begin{itemize}}
\newcommand{\ei}{\end{itemize}}
\newcommand{\mathbbm}[1]{\text{\usefont{U}{bbm}{m}{n}#1}}
\def\ba#1\ea{\begin{align}#1\end{align}}
\def\bg#1\eg{\begin{gather}#1\end{gather}}
\def\bm#1\em{\begin{multline}#1\end{multline}}
\def\bmd#1\emd{\begin{multlined}#1\end{multlined}}
\setlength{\intextsep}{-1ex} 

\def\XD#1{{\color{magenta}{ [#1]}}}
\def\SAM#1{{\color{red}{ [#1]}}}
\def\WW#1{{\color{blue}{ [#1]}}}

\def\a{\alpha}
\def\b{\beta}
\def\c{\chi}
\def\C{\Chi}
\def\d{\delta}
\def\D{\Delta}
\def\e{\epsilon}
\def\ve{\varepsilon}
\def\g{\gamma}
\def\G{\Gamma}
\def\h{\eta}
\def\k{\kappa}
\def\l{\lambda}
\def\L{\Lambda}
\def\m{\mu}
\def\n{\nu}
\def\p{\phi}
\def\P{\Phi}
\def\vp{\varphi}
\def\q{\theta}
\def\Q{\Theta}
\def\r{\rho}
\def\s{\sigma}
\def\S{\Sigma}
\def\t{\tau}
\def\u{\upsilon}
\def\U{\Upsilon}
\def\w{\omega}
\def\W{\Omega}
\def\x{\xi}
\def\X{\Xi}
\def\y{\psi}
\def\Y{\Psi}
\def\z{\zeta}

\newcommand{\la}{\label}
\newcommand{\ci}{\cite}
\newcommand{\re}{\ref}
\newcommand{\er}{\eqref}
\newcommand{\se}{\section}
\newcommand{\sse}{\subsection}
\newcommand{\ssse}{\subsubsection}
\newcommand{\fr}{\frac}
\newcommand{\na}{\nabla}
\newcommand{\pa}{\partial}
\newcommand{\td}{\tilde}
\newcommand{\wtd}{\widetilde}
\newcommand{\ph}{\phantom}
\newcommand{\eq}{\equiv}
\newcommand{\wg}{\wedge}
\newcommand{\cd}{\cdots}
\newcommand{\nn}{\nonumber}
\newcommand{\qu}{\quad}
\newcommand{\qqu}{\qquad}
\newcommand{\lt}{\left}
\newcommand{\rt}{\right}
\newcommand{\lra}{\leftrightarrow}
\newcommand{\ol}{\overline}
\newcommand{\ap}{\approx}
\renewcommand{\(}{\left(}
\renewcommand{\)}{\right)}
\renewcommand{\[}{\left[}
\renewcommand{\]}{\right]}
\newcommand{\<}{\langle}
\renewcommand{\>}{\rangle}
\newcommand{\Hc}{\mathcal{H}_{code}}
\newcommand{\HR}{\mathcal{H}_R}
\newcommand{\HRb}{\mathcal{H}_{\ol{R}}}
\newcommand{\lan}{\langle}
\newcommand{\ran}{\rangle}
\newcommand{\Hra}{\mathcal{H}_{W_\alpha}}
\newcommand{\Hrba}{\mathcal{H}_{\ol{W}_\alpha}}

\newcommand{\bH}{{\mathbb H}}
\newcommand{\bR}{{\mathbb R}}
\newcommand{\bZ}{{\mathbb Z}}
\newcommand{\cA}{{\mathcal A}}
\newcommand{\cB}{{\mathcal B}}
\newcommand{\cC}{{\mathcal C}}
\newcommand{\cE}{{\mathcal E}}
\newcommand{\cI}{{\mathcal I}}
\newcommand{\cN}{{\mathcal N}}
\newcommand{\cO}{{\mathcal O}}
\newcommand{\zb}{{\bar z}}

\newcommand{\Area}{\operatorname{Area}}
\newcommand{\ext}{\operatorname*{ext}}
\newcommand{\total}{\text{total}}
\newcommand{\bulk}{\text{bulk}}
\newcommand{\brane}{\text{brane}}
\newcommand{\matter}{\text{matter}}
\newcommand{\Wald}{\text{Wald}}
\newcommand{\anomaly}{\text{anomaly}}
\newcommand{\extrinsic}{\text{extrinsic}}
\newcommand{\gen}{\text{gen}}
\newcommand{\mc}{\text{mc}}

\newcommand{\T}[3]{{#1^{#2}_{\ph{#2}#3}}}
\newcommand{\Tu}[3]{{#1_{#2}^{\ph{#2}#3}}}
\newcommand{\Tud}[4]{{#1^{\ph{#2}#3}_{#2\ph{#3}#4}}}
\newcommand{\Tdu}[4]{{#1_{\ph{#2}#3}^{#2\ph{#3}#4}}}
\newcommand{\Tdud}[5]{{#1_{#2\ph{#3}#4}^{\ph{#2}#3\ph{#4}#5}}}
\newcommand{\Tudu}[5]{{#1^{#2\ph{#3}#4}_{\ph{#2}#3\ph{#4}#5}}}

\newcommand{\bs}{\boldsymbol}
\newcommand{\bfRho}{\bs{\rho}}
\newcommand{\bfS}{\textbf{S}}

\title{
Replica Wormholes and Holographic Entanglement Negativity
}
\author{Xi Dong,}
\author{Sean McBride,}
\author{and Wayne W. Weng}
\affiliation{Department of Physics, University of California, Santa Barbara, CA 93106, USA}
\emailAdd{xidong@ucsb.edu}
\emailAdd{seanmcbride@ucsb.edu}
\emailAdd{wweng@ucsb.edu}

\abstract{
Recent work has shown how to understand the Page curve of an evaporating black hole from replica wormholes.  However, more detailed information about the structure of its quantum state is needed to fully understand the dynamics of black hole evaporation.  Here we study entanglement negativity, an important measure of quantum entanglement in mixed states, in a couple of toy models of evaporating black holes.  We find four phases dominated by different types of geometries: the disconnected, cyclically connected, anti-cyclically connected, and pairwise connected geometries.  The last of these geometries are new replica wormholes that break the replica symmetry spontaneously.  We also analyze the transitions between these four phases by summing more generic replica geometries using a Schwinger-Dyson equation.  In particular, we find enhanced corrections to various negativity measures near the transition between the cyclic and pairwise phase.
}
\maketitle

\section{Introduction}

Replica wormholes have played an important role in recent progress on solving the black hole information problem~\cite{Penington:2019kki,Almheiri:2019qdq}.  These wormholes appear as nontrivial saddle points that could dominate gravitational path integrals with replicated boundary conditions.  Their appearance leads to nontrivial ``island'' contributions in the quantum extremal surface (QES) formula for gravitational entropy~\cite{Engelhardt:2014gca,Dong:2017xht,Penington:2019npb,Almheiri:2019psf}.

So far most of the discussion has been centered on the von Neumann entropy.  While obtaining the von Neumann entropy is a good first step, we need more detailed information about the quantum state --- such as more general measures of entanglement --- to fully solve the black hole information problem.

In this paper, we take a first step towards understanding the structure of entanglement in an evaporating black hole and its Hawking radiation by studying entanglement negativity and its R\'enyi generalizations in a couple of toy models.  Just as the von Neumann entropy is a measure of quantum entanglement in pure states, the negativity is an important measure of entanglement in generally mixed states.  Therefore, the negativity provides an interesting probe in diagnosing the structure of multipartite entanglement in systems such as an evaporating black hole.

To understand negativity intuitively, consider a general state on two subsystems that is described by a density matrix, and take its partial transpose on the second subsystem.  The partially transposed density matrix could have negative eigenvalues, and the degree to which the eigenvalues are negative is characterized by the negativity and logarithmic negativity.  Both of these negativity measures are entanglement monotones, and the logarithmic negativity provides an upper bound on the distillable entanglement~\cite{Vidal:2002zz,Plenio:2005cwa,Audenaert:2003}.  Negativity has been discussed in a number of interesting prior works~\cite{Calabrese:2012ew,Calabrese:2012nk,Rangamani:2014ywa,Calabrese:2014yza,Chaturvedi:2016rft,Chaturvedi:2016rcn,Jain:2017aqk,Jain:2017xsu,Malvimat:2017yaj,Kudler-Flam:2018qjo,Tamaoka:2018ned,Malvimat:2018txq,Kudler-Flam:2019wtv,Kusuki:2019zsp,Basak:2020bot,Basak:2020oaf,Kudler-Flam:2020xqu,Lu:2020jza,Basak:2020aaa,Shapourian:2020mkc,Dong:2021clv,KumarBasak:2021rrx,Vardhan:2021npf}.

We now give a short summary of this paper.

In Section~\re{sec:neg}, we review the definition and properties of the negativity and its R\'enyi generalizations. In Section~\re{sec:jt}, we start our study of negativity in a toy model of an evaporating black hole in Jackiw-Teitelboim (JT) gravity with an end-of-the-world (EOW) brane.  This is a slight generalization of the model studied in~\cite{Penington:2019kki}, with the system describing the Hawking radiation divided into two subsystems so as to study negativity.

As we tune the parameters of the model, we find a rich phase diagram for the negativities consisting of four phases (see Figure~\re{fig:phase1}).  Each of the four phases is dominated by a saddle-point geometry of JT gravity (or a set of saddle points).  For a black hole before the Page time, we find a phase dominated by a totally disconnected geometry, whereas after the Page time, we find three distinct phases depending on how we divide the radiation system into two subsystems: the first phase is dominated by a cyclically connected geometry (which is the replica wormhole of~\cite{Penington:2019kki}), the second dominated by an ``anti-cyclically'' connected geometry, and the third dominated by pairwise connected geometries that are in one-to-one correspondence with non-crossing pairings.  These pairwise connected geometries are new replica wormholes that spontaneously break the replica symmetry.  Their appearance agrees with the general discussions on holographic negativity in~\cite{Dong:2021clv}.

In Sections~\re{sec:resolvent} and~\re{sec:trans}, we study the behavior of negativities near the transitions between the four phases.  Near these phase transitions, more geometries than the four types described earlier could dominate the gravitational path integral for R\'enyi negativities, and we need to sum over them.  In order to obtain the negativity and logarithmic negativity (as well as related negativity measures such as the partially transposed entropy~\cite{Dong:2021clv,Tamaoka:2018ned}), we need to analytically continue in the replica number.  We achieve this by using the resolvent for the partially transposed density matrix to find its eigenvalue distribution (which we call the ``negativity spectrum'').  To calculate this ``negativity resolvent,'' we organize the sum over geometries into a Schwinger-Dyson equation, which is similar to the method used in~\cite{Penington:2019kki}.  We develop this method for negativity in Section~\re{sec:resolvent} and apply it to both a microcanonical ensemble and canonical ensemble in Section~\re{sec:trans}. 

When the black hole is in a microcanonical ensemble, the Schwinger-Dyson equation simplifies into a cubic equation for the negativity resolvent, leading to concrete results for the negativities near all phase transitions.  This is similar to the case of a random mixed state studied in~\cite{Shapourian:2020mkc}.

When the black hole is in a canonical ensemble, the gravitational calculation is technically more difficult.  As a result, we study each of the phase transitions separately, for we only need to sum over a subset of geometries near each transition.  Near the transition between the ``disconnected'' phase and ``pairwise'' phase, the Schwinger-Dyson equation again simplifies, this time into a quadratic equation for the negativity resolvent.  Near the transition between the ``cyclic'' phase and ``pairwise'' phase, it is difficult to solve the Schwinger-Dyson equation exactly, but we solve it approximately in the semiclassical, or $\b\to0$, limit.  From its solution, we find that the negativity spectrum near the phase transition consists of two branches, each of which is approximately a shifted thermal spectrum with a cutoff.  One branch consists of positive eigenvalues, and the other has negative eigenvalues.  From this we find enhanced corrections to various negativity measures near the phase transition.  In particular, a quantity known as the refined R\'enyi-2 negativity receives an $\cO(1/\sqrt{\b})$ correction, similar to the enhanced corrections to the von Neumann entropy at the Page transition~\cite{Penington:2019kki,Dong:2020iod,Marolf:2020vsi}, whereas other negativity measures such as the logarithmic negativity and the partially transposed entropy exhibit $\cO(1/\b)$ corrections, similar to what happens to R\'enyi entropies $S_n$ with $n<1$.

Moving beyond the JT gravity model, we study in Section~\re{sec:top} the behavior of negativities in a topological model of 2-dimensional gravity with EOW branes.  This is a slight generalization of the model of~\cite{Marolf:2020xie}, where we again divide the radiation system into two subsystems to study negativity.  We find the situation to be very similar to the case of a microcanonical ensemble in JT gravity described earlier.  In particular, the Schwinger-Dyson equation again simplifies into a cubic equation for the negativity resolvent, leading to concrete results for the negativities.

We end with some concluding remarks in Section~\re{sec:conclusion} and several appendices.  In Appendix~\re{app:dom}, we derive the set of dominant geometries in each of the phases and near phase transitions.  In Appendix~\re{app:trans}, we provide a more detailed analysis near the transition between the cyclic phase and pairwise phase in the canonical ensemble.  In Appendix~\re{app:renyi}, we study the R\'enyi entropies near the Page transition in a similar fashion and show that they exhibit corrections analogous to the corrections to the negativities near the phase transition.\\

Related works appeared recently and have some partial overlap with our results on the study of negativity in JT gravity in the microcanonical ensemble \cite{Kudler-Flam:2021efr} and the canonical ensemble \cite{Vardhan:2021npf,Vardhan:2021mdy}.

\section{Entanglement negativity and R\'enyi negativities}
\label{sec:neg}

The motivation for entanglement negativity comes from the Peres-Horodecki criterion \cite{Peres:1996dw, Horodecki:1996nc}, also known as the PPT (positive partial transpose) criterion for mixed states, which we review here. Consider a mixed state $\rho_{AB}$ defined on the product Hilbert space $\mathcal{H} = \mathcal{H}_A \otimes \mathcal{H}_B$. A state is separable if it can be written as
\be
\rho_{AB} = \sum_{i=1}^k p_i \rho_A^i \otimes \rho_B^i, \; \; \; \; \sum_{i=1}^k p_i = 1
\ee
for states $\rho_A^i$ and $\rho_B^i$ on $\mathcal{H}_A$ and $\mathcal{H}_B$, respectively. Separable states are classical mixtures of product states and thus do not contain quantum entanglement; inseparable states are said to be entangled.

We denote the algebra of operators on $\mathcal{H}_i$ by $\mathcal{A}_i$, and the space of linear maps from $\mathcal{A}_A$ to $\mathcal{A}_B$ by $\mathcal{L}(\mathcal{A}_A, \mathcal{A}_B)$. A map $\Lambda \in \mathcal{L}(\mathcal{A}_A, \mathcal{A}_B)$ is said to be positive if 
\be
\Lambda: \mathcal{A}_A \rightarrow \mathcal{A}_B
\ee
maps positive operators to positive operators, and is completely positive if for all nonnegative integer $n$,
\be
\Lambda_n \equiv \Lambda \otimes \mathbbm{I}: \mathcal{A}_A \otimes \mathcal{M}_n \rightarrow \mathcal{A}_B \otimes \mathcal{M}_n
\ee
is positive, where $\mathcal{M}_n$ denotes the algebra of $n \times n$ complex matrices.  For separable states, this condition is clearly satisfied when $\Lambda$ is a positive map, as $\left(\Lambda \otimes \mathbbm{I}\right) \left( \rho_A \otimes \rho_B \right) = \left( \Lambda \rho_A \right) \otimes \rho_B \geq 0$. For inseparable states, this no longer holds in general, so a good diagnostic of entanglement would be a positive but not completely positive map, such that entangled states would have negative eigenvalues under the action of $\Lambda \otimes \mathbbm{I}$.

The partial transpose is such a positive but not completely positive map. Consider a bipartite system $AB$ with an orthonormal basis $\{ \ket{a} \}$ on $A$ and $\{ \ket{b} \}$ on $B$. Given a density matrix $\rho_{AB}$ on $AB$, we define the partially transposed density matrix as\footnote{For reasons that will become clear shortly, in later sections we will rename the subsystems $A$ and $B$ as $R_1$ and $R_2$, and the partial transposition $T_B$ is thus called $T_2$.}
\be
\expval{a, b | \rho_{AB}^{T_B} | a', b'} = \expval{a, b' | \rho_{AB}| a', b} .
\ee
Acting on a reduced density matrix on $B$, the partial tranpose becomes the usual transpose which preserves the eigenvalues of the original reduced density matrix and is therefore a positive map. Acting on the full density matrix is not guaranteed to preserve positivity. As an example, take an EPR pair of two qubits $A$ and $B$. The partial transpose of its density matrix has eigenvalues $\{\frac{1}{2}, \frac{1}{2}, \frac{1}{2}, -\frac{1}{2}\}$. We therefore see that the partial transpose can be a useful tool for differentiating between separable and inseparable states.\footnote{For $2 \times 2$ and $2 \times 3$ matrices, the PPT criterion is both necessary and sufficient for the state to be separable. For systems of general dimension it is not sufficient, as bound entangled states have positive semidefinite partial transpose and therefore require further entanglement criteria to be distinguished from separable states.}

The entanglement negativity $\mathcal{N} (\rho)$ is defined as the sum of the absolute values of the negative eigenvalues of this partially transposed density matrix and can be variably written as
\be
\mathcal{N}(\rho_{AB}) = \frac{||\rho_{AB}^{T_B}||_1-1}{2} = \sum_i \frac{\abs{\lambda_i} - \lambda_i}{2} = \sum_{i:\lambda_i < 0} \abs{\lambda_i}.
\ee
Here $||X||_1 \equiv \Tr \abs{X} = \Tr \sqrt{X^\dagger X}$ is the Schatten 1-norm of a matrix $X$. We see why negativity is such an appealing entanglement measure, as it is computed directly from a trace, as opposed to a variational principle in the case of other entanglement measures. Note that as we are taking a trace, it does not matter which subsystem we take the partial trace over, so choosing $\rho_{AB}^{T_B}$ instead of  $\rho_{AB}^{T_A}$ is merely a convention. The logarithmic negativity is similarly defined by
\be
\mathcal{E}(\rho_{AB}) = \log \( \sum_i \abs{\lambda_i} \) = \log \big( 2 \mathcal{N}(\rho_{AB}) + 1 \big).
\label{eq:logneg}
\ee
The logarithmic negativity is an upper bound on the distillable entanglement, i.e., the asymptotic number of EPR pairs that can be extracted from a set of identically prepared $\rho_{AB}$ with local operations and classical communication (LOCC).

We can also write a R\'enyi version of negativity via
\be
\mathcal{N}_n(\rho_{AB}) =  \Tr \left[ \left(\rho_{AB}^{T_B}\right)^n\right].
\ee
There is a subtlety in the analytic continuation of the R\'enyi negativity. As the negativity is defined by the absolute value of the eigenvalues of the partially transposed density matrix and the R\'enyi negativity is defined without an absolute value, we need to define different analytic continuations for even and odd $n$ such that
\ba
\mathcal{N}_{2m}^{\textrm{(even)}} &=  \sum_i \abs{\lambda_i}^{2m}, \nonumber \\
\mathcal{N}_{2m-1}^{\textrm{(odd)}} &=  \sum_i \textrm{sgn}(\lambda_i)\abs{\lambda_i}^{2m-1}.
\ea
The logarithmic negativity is then obtained from the even analytic condition:
\be
\mathcal{E}(\rho_{AB}) = \lim_{m \rightarrow 1/2} \log \mathcal{N}_{2m}^{\textrm{(even)}}\left( \rho_{AB} \right).
\ee
In this paper, we will also be interested in a generalization of the R\'enyi negativities termed refined R\'enyi negativities, which are given by
\ba
S^{T_B(n,\textrm{even})}(\rho_{AB}) &= -n^2 \partial_n \left( \frac{1}{n} \log \mathcal{N}_{n}^{\textrm{(even)}}  \right), \nonumber \\
S^{T_B(n,\textrm{odd})}(\rho_{AB}) &= -n^2 \partial_n \left( \frac{1}{n} \log \mathcal{N}_{n}^{\textrm{(odd)}}  \right).
\ea

The refined R\'enyi negativities are inspired by the refined R\'enyi entropies defined in~\cite{Dong:2016fnf}. In particular, we will be interested in two measures that descend from the refined R\'enyi negativities. The first is the partially transposed entropy $S^{T_B}(\rho_{AB})$ of~\cite{Dong:2021clv,Tamaoka:2018ned}, defined as the $m\to1$ limit of the refined odd R\'enyi negativity:
\be
S^{T_B} (\rho_{AB}) = - \frac{1}{2}\lim_{m \rightarrow 1} \partial_m \log \mathcal{N}_{2m-1}^{\textrm{(odd)}} = - \sum_i \lambda_i \log \abs{\lambda_i}.
\ee
$S^{T_B}$ is so named in analogy with the von Neumann entropy.  The other measure is the refined R\'enyi-2 negativity $S^{T_B(2,\textrm{even})}(\rho_{AB})$, which can be written as
\be
S^{T_B(2,\textrm{even})}(\rho_{AB}) = -\lim_{m \rightarrow 1} m^2 \partial_m \left( \frac{1}{m} \log \mathcal{N}_{2m}^{\textrm{(even)}} \right) = - \sum_i \frac{\lambda_i^2}{\sum_j \lambda_j^2} \log\( \frac{\lambda_i^2}{\sum_j \lambda_j^2}\).
\label{eq:secondrefined}
\ee
We will refer to this quantity as $S^{T_B(2)}$ for short. It is equivalent to the von Neumann entropy of $\(\rho_{AB}^{T_B}\)^2/\Tr\[\(\rho_{AB}^{T_B}\)^2\]$.

\section{The model and four phases}\label{sec:jt}

\subsection{JT gravity with EOW branes}
We start by reviewing the simple model of black hole evaporation studied in~\cite{Penington:2019kki} (see also~\cite{Kourkoulou:2017zaj,Almheiri:2018xdw}). This model consists of a black hole in JT gravity, decorated with an end-of-the-world (EOW) brane with tension $\mu$. The action is given by
\be
I = I_{\text{JT}} + \mu \int_\text{brane} dy,
\label{eq:action}
\ee
with the JT action being
\be
I_{\text{JT}} = -\frac{S_0}{2\pi} \left[ \frac{1}{2} \int_\mathcal{M} R + \int_\mathcal{\partial M} K \right] - \left[ \frac{1}{2} \int_\mathcal{M} \phi (R+2) + \int_\mathcal{\partial M} \phi K \right].
\ee
We have set $G_N = 1$, though it can be restored by sending the inverse temperature $\beta \rightarrow \beta G_N$. The parameter $S_0$ can be thought of as the zero temperature entropy of an eternal two-dimensional black hole. The EOW brane is endowed with $k$ orthonormal states, or ``flavors," which are entangled with an auxiliary reference system $R$. The states on the brane can be thought of as describing the interior partners of the early Hawking radiation $R$, so by increasing $k$ we can probe later regimes of an ``evaporating'' black hole. 

The entangled state of the black hole system $B$ and the ``radiation'' $R$ can be written as
\be
\ket{\Psi} = \frac{1}{\sqrt{k}} \sum_{i=1}^k \ket{\psi_i}_B \ket{i}_R.
\ee
The density matrix of the $R$ subsystem is therefore
\be
\rho_R = \frac{1}{k} \sum_{i,j=1}^k \ket{j}\bra{i}_R \braket{\psi_i}{\psi_j}_B.
\ee
The inner product $\braket{\psi_i}{\psi_j}_B$ is given by a gravitational path integral with standard Dirichlet boundary conditions on an asymptotic boundary interval and Neumann boundary conditions on the EOW branes:
\be
ds^2|_{\partial \mathcal{M}} = \frac{1}{\epsilon^2} d\tau^2, \; \; \; \; \phi = \frac{1}{\epsilon}, \; \; \; \; \epsilon \rightarrow 0 \nn \\
\ee
\be
\partial_n \phi|_{\textrm{brane}} = \mu, \; \; \; \; K = 0
\label{eq:bounds}
\ee
as well as specifying the brane states $i$ and $j$ at the endpoints of the EOW brane. As was shown in~\cite{Penington:2019kki}, while naively $\braket{\psi_i}{\psi_j} \propto \delta_{ij}$, this should be understood as an expectation value in an ensemble, and wormhole contributions indicate exponentially small fluctuations of the inner product. We illustrate the boundary conditions for the matrix elements of $\rho_R$ as follows:
\be
\includegraphics[width=.5\textwidth]{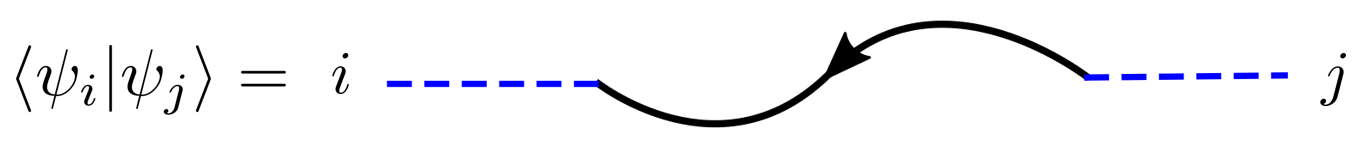}
\label{eq:braneBC}
\ee
The solid black line denotes an asymptotic boundary interval for the gravitational path integral, while the blue dashed lines are index lines that impose boundary conditions for the brane states. Computing $\Tr\( \rho_R \)$ means contracting the open index lines and summing over all possible geometries respecting the boundary conditions \eqref{eq:bounds}.

To study negativity, we need to consider a bipartite mixed state. To that end, we split the radiation system into two subsystems $\mathcal{H}_R = \mathcal{H}_{R_1} \otimes \mathcal{H}_{R_2}$ consisting of $k_1$ and $k_2$ states, respectively, such that $k = k_1 k_2$ and 
\be
\rho_{R_1 R_2} = \frac{1}{k}\sum_{i_1,i_2=1}^{k_1} \sum_{j_1,j_2=1}^{k_2} |i_1,j_1 \rangle \langle i_2,j_2| \langle\psi_{i_2,j_2}|\psi_{i_1,j_1}\rangle.
\ee
We will refer to this partitioned density matrix as $\rho_R$ from now on. We define our partially transposed density matrix as the partial transpose over $R_2$, i.e.,
\be
\rho_{R_1 R_2}^{T_{R_2}} =\frac{1}{k}\sum_{i_1,i_2=1}^{k_1} \sum_{j_1,j_2=1}^{k_2} |i_1,j_2 \rangle \langle i_2,j_1| \langle\psi_{i_2,j_2}|\psi_{i_1,j_1}\rangle.
\ee
We will use the shorthand $\rho_R^{T_2}$ moving forward. This partial transpose affects the boundary conditions for our path integral by swapping the brane flavor index lines corresponding to states in $\mathcal{H}_{R_2}$. The resulting boundary conditions are illustrated in Figure~\ref{fig:rhoBCs}.
\begin{figure}
    \centering
    \includegraphics[width=0.9\textwidth]{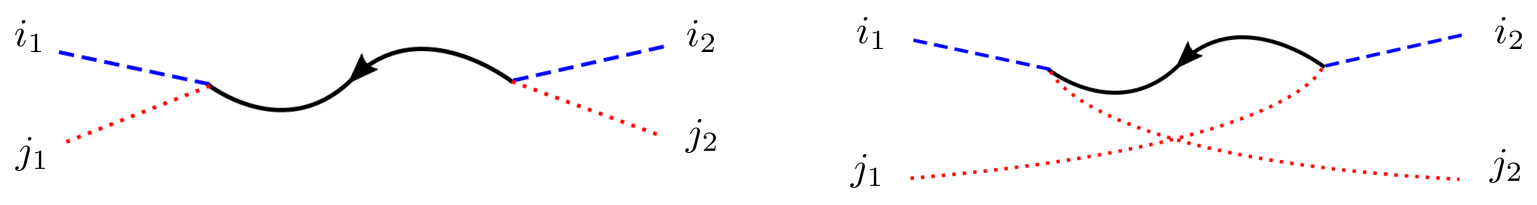}
    \caption{Boundary conditions for $\rho_{R_1R_2}$ and $\rho_{R_1R_2}^{T_2}$. Blue (dashed) lines denote states in $R_1$, and red (dotted) lines denotes states in $R_2$. If we take a trace, these two boundary conditions are equivalent.}
    \label{fig:rhoBCs}
\end{figure}

\subsection{Dominant saddles} \label{sec:domsaddles}
As in any calculation with a gravitational path integral, our first task is to identify the saddle-point geometries which obey the given boundary conditions and sum over them with the appropriate weight. As our goal is to compute R\'enyi negativities $\Tr \[\( \rho_R^{T_2} \)^n\]$, our boundary conditions will consist of $n$ copies of the boundary conditions illustrated on the right of Figure \ref{fig:rhoBCs}, with matching brane flavor indices contracted. The set of all classical saddles consists of oriented two-dimensional surfaces which end on the asymptotic boundaries and EOW branes, possibly connecting two or more boundaries.

\begin{figure}
\centering
\includegraphics[width=\textwidth]{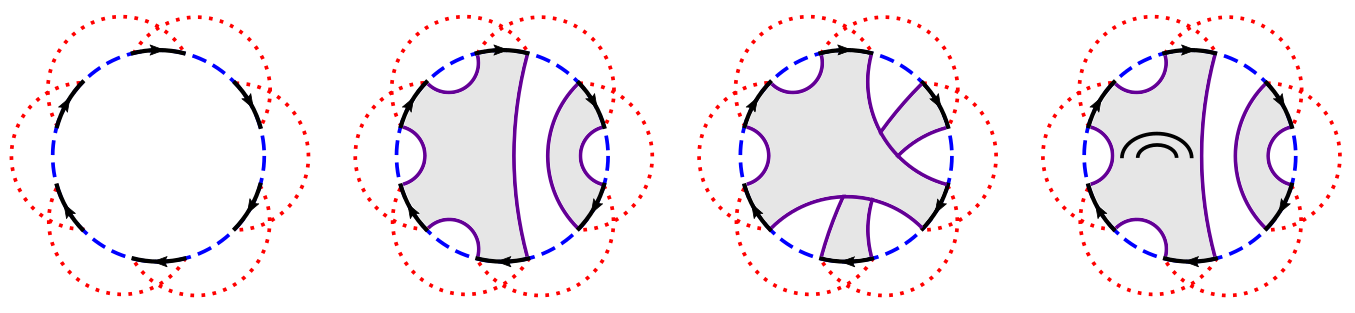}
\caption{Boundary conditions and some possible classical geometries that contribute to $\Tr [( \rho_R^{T_2} )^6]$. Index lines that run along EOW branes are not shown for visual clarity. Left: The partially transposed boundary conditions from Figure \ref{fig:rhoBCs} with brane flavor indices contracted. Center left and right: The geometries consist of disjoint unions of disks in non-crossing or crossing configurations. Right: The geometry has a single handle and will be suppressed by a factor of $e^{-2S_0}$ relative to the first geometry.} 
\label{fig:exDiagrams}
\end{figure}

As our gravitational action \eqref{eq:action} is independent of brane flavor, we can factorize the flavor contributions so that
\be
Z_{\textrm{saddle}} = Z_{\textrm{grav}} f(k_1,k_2)
\ee
for some function $f$ of the brane Hilbert space dimensions $k_1$ and $k_2$. The gravitational partition function $Z_n$ for a surface connecting $n$ boundaries depends on the Euler characteristic $\chi$ of the surface in the schematic form
\be
Z_n \sim e^{S_0 \chi}.
\label{eq:Zn1}
\ee
The contribution of a surface with genus $g \geq 1$ is therefore suppressed by $e^{-2gS_0}$ for large $S_0$. This means that the only classical geometries we need to consider are disks or disjoint unions of disks, and $Z_{\textrm{grav}}$ is a product of disk partition functions $Z_n$. We will therefore assume $S_0 \gg 1$ throughout the rest of the paper. We illustrate some examples of these disk geometries as well as a higher genus geometry in Figure~\ref{fig:exDiagrams}.

More precisely, for a disk connecting $n$ boundaries in JT gravity we have 
\begin{equation}
Z_n = e^{S_0}\int_0^\infty ds \rho(s) y(s)^n, \quad y(s) \equiv e^{-\frac{\beta s^2}{2}}2^{1-2\mu} \abs{\Gamma \left( \mu - \frac{1}{2} + is \right)}^2
\label{eq:Zn}
\end{equation}
where $\rho(s) = \frac{s}{2\pi^2} \sinh\(2\pi s\)$ is the disk density of states\footnote{The density of states is more typically written in the $E = s^2$ energy basis such that $\rho(E) = \frac{1}{4\pi^2} \sinh \left( 2 \pi \sqrt{E}\right)$. Here $s$ can be thought of as an entropy, and is the more natural variable for our purposes.} in JT gravity. In order to recover \eqref{eq:Zn1}, we take $S_0 \gg 1$ while keeping other parameters fixed. We emphasize this is a schematic approximation that should only be used to motivate the pertinent saddles for our problem; in general, there are parametric corrections from the full expression of $Z_n$ which will be discussed in more detail in Sections~\ref{sec:resolvent} and~\ref{sec:trans}.

Since we are ignoring higher genus surfaces, the sum over geometries with $n$ replicated asymptotic boundaries is equivalent to a sum over elements of $S_n$, the permutation group on $n$ elements. In particular, the sum takes the form 
\ba
\Tr\[\(\rho_R^{T_2}\)^n\] &= \frac{1}{(k Z_1)^n} \sum_{g \in S_n} \(\prod_{i=1}^{\chi(g)} Z_{|c_i(g)|}\) k_1^{\chi(g^{-1} X)} k_2^{\chi(g^{-1} X^{-1})} \label{eq:renyi_neg_exact} \\
&\sim \frac{1}{(ke^{S_0})^n} \sum_{g \in S_n} \left( e^{S_0}\right)^{\chi(g)} k_1^{\chi(g^{-1} X)} k_2^{\chi(g^{-1} X^{-1})}, \label{eq:renyi_neg_schem}
\ea
where $\chi(g)$ is the number of disjoint cycles of the permutation $g$, $|c_i(g)|$ is the length of the $i$-th disjoint cycle of $g$, and $X$ ($X^{-1}$) is the (anti-)cyclic permutation of length $n$. Unless otherwise specified, we will take $k, e^{S_0} \gg 1$. Note
\be
\chi(\mathbbm{1}) = n, \quad \chi(X) = \chi(X^{-1}) = 1.
\ee

\begin{figure}
    \centering
    \includegraphics[width=\textwidth]{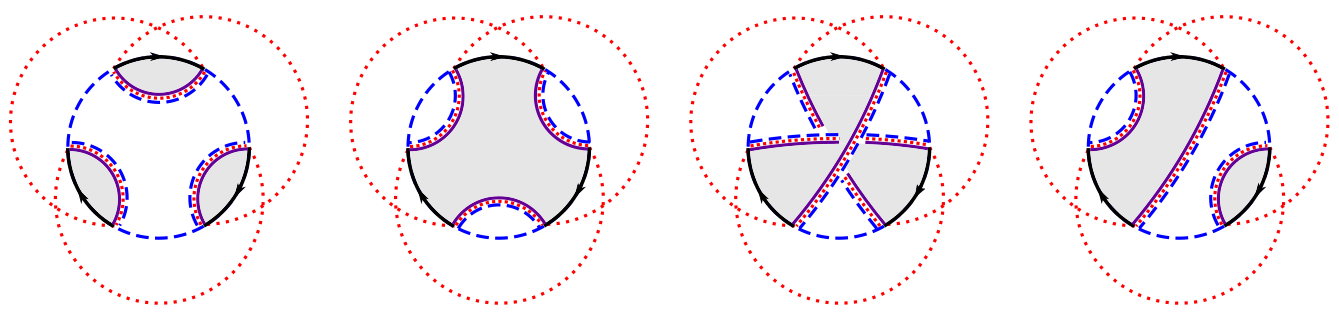}
    \caption{The four classes of geometries which dominate the R\'enyi negativity calculation. In order, they are the disconnected ($g = \mathbbm{1}$), cyclic ($g = X$), anti-cyclic ($g= X^{-1}$), and pairwise ($g=\tau$) geometries. The pairwise geometries spontaneously break replica symmetry. The black lines are oriented asymptotic boundaries, the purple lines are EOW branes, the blue (dashed) lines denote $k_1$ index loops, and the red (dotted) lines denote $k_2$ index loops.}
    \label{fig:geos}
\end{figure}

The sums~\eqref{eq:renyi_neg_schem} and ~\eqref{eq:renyi_neg_exact} over elements of the permutation group has a simple geometric interpretation. The permutation $g$ determines how the asymptotic boundaries are connected by EOW branes, while the powers of $k_1$ and $k_2$ count the number of index loops. The totally disconnected geometry corresponds to $g = \mathbbm{1}$, while the totally connected geometry corresponds to $g = X$. What does the $g = X^{-1}$ geometry look like? We show examples of these three geometries in Figure \ref{fig:geos}. Two of these geometries, the disconnected and cyclic geometries, belong in the class of non-crossing diagrams discussed by \cite{Penington:2019kki}. The third, the anti-cyclic geometry, is in some sense equivalent to the cyclic geometry if one reverses the orientation of the boundary, or equivalently if one exchanges $k_1$ and $k_2$. This statement will be explained in more detail in Section \ref{sec:antiplanar}.

One might naively guess that the anti-cyclic geometry would never dominate the R\'enyi negativity, for the same reasons as in~\cite{Penington:2019kki} where crossing partitions in the calculation of the R\'enyi entropy were suppressed by factors of $1/k^2$. In fact this geometry dominates in a very large parameter regime: as we show in Appendix~\ref{app:dom}, we have the following phases dominated by the corresponding permutation $g$:
\begin{alignat}{4}
&\textrm{Totally disconnected:} \qu && e^{S_0} \gg k_1 k_2 && \rightarrow && \qqu g = \mathbbm{1}  \nonumber \\
&\textrm{Cyclically connected:} \qu && k_1 \gg k_2  e^{S_0} && \rightarrow && \qqu g = X \nonumber  \\
&\textrm{Anti-cyclically connected:} \qu && k_2 \gg k_1 e^{S_0} && \rightarrow && \qqu g = X^{-1} \nn \\
&\textrm{Pairwise connected:} \qu && k_1 k_2 \gg e^{S_0}, && \rightarrow && \qqu g = \tau \\
& &&  e^{-S_0} \ll k_1/k_2 \ll e^{S_0} && && \nn
\end{alignat}

To see this intuitively, we calculate the contributions to $\Tr \[\( \rho_R^{T_2}\)^n\]$ from these geometries. Consider the totally disconnected phase dominated by $g = \mathbbm{1}$. We have $\chi(g)=n$ and $\chi(g^{-1}X) = \chi(g^{-1}X^{-1}) = 1$, so this diagram contributes schematically
\be
g = \mathbbm{1} \qu \Rightarrow \qu k \(e^{S_0}\)^n
\ee
to the sum in~\er{eq:renyi_neg_schem} as in~\cite{Penington:2019kki}. It is then unsurprising that this dominates in the parameter regime $e^{S_0} \gg k$, since it is the unique diagram which maximizes the power of $e^{S_0}$. Note that the contribution only depends on $k$, and not $k_1$ and $k_2$ individually.

Now, consider the cyclically connected phase with $g = X$. Then $\chi(g) = 1$, $\chi(g^{-1}X) = \chi(\mathbbm{1}) = n$, and
\be
\chi(g^{-1}X^{-1}) = \chi(X^{-2}) = f(n) \equiv \begin{cases}
    1, & \text{$n$ odd},\\
    2, & \text{$n$ even}.
\end{cases}
\ee
Hence, the cyclic diagrams contribute schematically
\begin{alignat}{3}
&g = X  && \qu \Rightarrow \qu && e^{S_0} k_1^n k_2^{f(n)}
\end{alignat}
to the sum in~\er{eq:renyi_neg_schem}. This configuration maximizes the power of $k_1$, and therefore it is expected to become important in the parameter regimes where $k_1$ is comparably large. In fact, as we prove in Appendix~\ref{app:dom}, it is the unique dominant diagram in the regime $k_1 \gg k_2 e^{S_0}$. Note that compared to the R\'enyi entropy calculation in~\cite{Penington:2019kki} (which can be recovered by setting $k_2=1$ here), the cyclic geometry is suppressed by $1/k_2^{n-f(n)}$. We will also show this diagrammatically in Section \ref{sec:resolvent}.

The anti-cyclically connected phase with $g = X^{-1}$ is similar, so we will not go through the analysis. In the end, the anti-cyclic diagrams contribute schematically
\begin{alignat}{3}
&g = X^{-1} && \qu \Rightarrow  \qu && e^{S_0} k_1^{f(n)} k_2^n
\end{alignat}
to the sum~\eqref{eq:renyi_neg_schem}. It is thus expected to become important in the parameter regimes where $k_2$ is comparably large, and we can prove that they are the unique dominant diagrams when $k_2 \gg k_2 e^{S_0}$.

Finally, there is one additional class of dominant geometries we should consider: the pairwise connected phase with $g = \tau$. As we show in Appendix~\ref{app:dom}, these diagrams dominate in a fourth regime satisfying both $k_1 k_2 \gg e^{S_0}$ and $e^{-S_0} \ll k_1/k_2 \ll e^{S_0}$, and they are the only diagrams aside from the disconnected, cyclically connected, and anti-cyclically connected diagrams that can dominate in a large regime of the parameter space. These geometries are in one-to-one correspondence with the set of permutations $\tau$ known as non-crossing pairings. For even $n$, a pairwise connected geometry is constructed by choosing an element in $\tau$, for example $(12)(34)\cdots(n-1,n)$, and connecting paired asymptotic boundaries by two-boundary wormholes. For odd $n$, the geometries are given by a similar non-crossing pairings of boundaries, plus a single one-boundary connected component. We show an example of such a geometry in Figure \ref{fig:geos}. It is evident that such geometries spontaneously break the replica symmetry.

As we show in Appendix~\ref{app:dom}, each pairwise connected geometry contributes schematically
\be
g = \tau \implies \left( e^{S_0} \right)^{\left\lceil \frac n2 \right\rceil}k^{\left\lfloor \frac n2 \right\rfloor + 1}
\ee
to the sum in~\er{eq:renyi_neg_schem}, where $\lceil\frac n2\rceil$ and $\lfloor \frac n2 \rfloor$ denote the ceiling and floor function, respectively. A pairwise connected diagram in some sense puts $k_1$, $k_2$, and $e^{S_0}$ on the most equal footing by maximizing the sum of the three exponents in \eqref{eq:renyi_neg_schem}. As with the disconnected geometry, the contribution of a pairwise connected geometry only depends on $k$, and not $k_1$ and $k_2$ individually.

\begin{figure}
    \centering
    \includegraphics[width=0.5\textwidth]{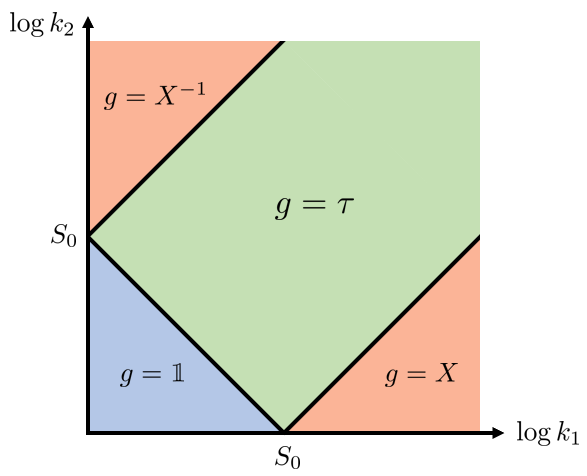}
    \caption{The phase diagram for entanglement negativity.  The four phases are labeled by the permutations corresponding to their dominant geometries.}
    \label{fig:phase1}
\end{figure}

\subsection{Contributions to negativities}

\begin{table}
\centering
\def\arraystretch{1.8}
\begin{tabular}{ |c||c|c|c|c|  }
 \hline
 \multicolumn{5}{|c|}{Negativities in Dominant Phases} \\
 \hline
 $g$& $\mathbbm{I}$ & $X$ & $X^{-1}$ &$\tau$\\
 \hline 
 $\mathcal{N}_{2m}^{\textrm{(even)}}$   & $\displaystyle \frac{1}{k^{2m-1}}$  &$ \displaystyle \frac{Z_{2m}}{k_2^{2m-2}Z_1^{2m}}$&   $\displaystyle \frac{Z_{2m}}{k_1^{2m-2}Z_1^{2m}}$&$\displaystyle \frac{C_m Z_2^m}{k^{m-1}Z_1^{2m}}$\\
 $\displaystyle \mathcal{N}_{2m-1}^{\textrm{(odd)}}$&   $\displaystyle \frac{1}{k^{2m-2}}$  & $\displaystyle \frac{Z_{2m-1}}{k_2^{2m-2}Z_1^{2m-1}}$   &$\displaystyle \frac{Z_{2m-1}}{k_1^{2m-2}Z_1^{2m-1}}$&$\displaystyle \frac{(2m-1) C_{m-1}Z_2^{m-1}}{k^{m-1}Z_1^{2m-2}}$\\
 $\mathcal{E}$ &0 & $\displaystyle \log k_2$&  $\displaystyle \log k_1$&$\displaystyle \frac{1}{2} \left( \log k - S_0  \right) + \log \frac{8}{3 \pi} $\\
 $S^{T_2}$    &$\displaystyle \log k$ & $\log k_2 +S_0$&  $\displaystyle \log k_1 + S_0$ & $\displaystyle \frac{1}{2} \left( \log k + S_0  \right) - \frac{1}{2}$ \\
 $S^{T_2(2)}$&  $\displaystyle \log k$  & $\displaystyle 2 \log k_2 + S_0$& $\displaystyle 2 \log k_1 + S_0$ &$\displaystyle \log k - \frac{1}{2}$\\
 \hline
\end{tabular}
\caption{Top two rows: R\'enyi negativities in the four dominant phases labeled by the corresponding permutation $g$.  Bottom three rows: schematic values of three special limits of the analytic continued R\'enyi negativities (where we have used $Z_n \sim e^{S_0}$). Here $C_m = \frac{1}{m+1} \binom{2m}{m}$ is the Catalan number which gives the number of non-crossing pairings.}
 \label{table}
\end{table}

Having worked out the dominant geometries in the four phases and how they contribute to the R\'enyi negativities schematically (i.e., using $Z_n \sim e^{S_0}$), we now write their contributions exactly using \er{eq:renyi_neg_exact}.  This calculation is straightforward to do, for both even and odd $n$.  We then analytically continue the resulting R\'enyi negativities and find the values of three special limits that we defined in Section~\ref{sec:neg}: the logarithmic negativity $\mathcal{E}$, partially transposed entropy $S^{T_2}$, and refined R\'enyi-2 negativity $S^{T_2(2)}$.  We collect these results for all four phases in Table \ref{table}.

From these results, we find a phase diagram for negativity, which we show in Figure~\ref{fig:phase1}.  Unlike the von Neumann entropy which only has a single phase transition at $k \sim e^{S_0}$, we see that there are two distinct types of phase transitions for negativity, one from the disconnected phase to the pairwise phase and one from the pairwise phase to the cyclic phase. The transition from the pairwise phase to the anti-cyclic phase is similar to the pairwise-to-cyclic transition under the exchange $k_1 \leftrightarrow k_2$.

\section{Resolvent equation for partial transpose}
\label{sec:resolvent}

Having analyzed the negativity measures deep within each of the four phases, we now turn our attention to the behavior of negativities near the phase transitions. Generally speaking, more geometries than the four types studied in the previous section could dominate the R\'enyi negativities near a phase transition, and we need to sum over them.  We would then need to analytically continue the resulting sum to find special limits such as the logarithmic negativity. This is technically difficult to do directly.

Instead, we study the resolvent for the partially transposed density matrix $\r_R^{T_2}$, which we refer to as the negativity resolvent or simply the resolvent.  From this resolvent, we then extract the negativity spectrum, i.e., the eigenvalue distribution of $\r_R^{T_2}$.  This allows us to calculate the R\'enyi negativities and their special limits.

In this section, we derive a self-consistent equation for calculating the resolvent. As we will show, the sum over dominant diagrams in a ``planar" regime reduces to a Schwinger-Dyson equation, which can be resummed. This allows us to write down a closed-form equation for the resolvent.

The negativity resolvent is defined in terms of the partially transposed density matrix as\footnote{The resolvent $R\(\lambda\)$ should not be confused with the subsystem $R$ describing the Hawking radiation.}
\be\label{eq:neg_resolvent_traced}
R\(\lambda\) = \Tr \( \fr{1}{\lambda\mathbb{I} - \rho_R^{T_2}} \).
\ee
From the resolvent, the eigenvalue spectrum for $\rho^{T_2}_R$, which we will denote by $D(\lambda)$, can be obtained by taking the discontinuity across the real axis as follows
\be
D(\lambda) = \lim_{\epsilon \to 0^+} \fr{1}{2\pi i}\big( R(\lambda-i\epsilon)- R(\lambda + i \epsilon)\big).
\ee
From this, we can compute the R\'enyi negativities via
\ba
\mathcal{N}_{2m}^{(\text{even})} &= \int d\lambda\,D\(\lambda\) |\lambda|^{2m} ,\\
\mathcal{N}_{2m-1}^{(\text{odd})} &= \int d\lambda\,D\(\lambda\) \text{sgn}\(\lambda\)|\lambda|^{2m-1},
\ea
from which all other negativity measures we consider can be derived.

It will be useful to consider \eqref{eq:neg_resolvent_traced} in the following matrix form:
\ba
R^{i_1i_2}_{j_1j_2}(\lambda) &= \(\fr{1}{\lambda\mathbb{I}-\rho^{T_2}_R}\)^{i_1i_2}_{j_1j_2} \\
&= \frac{\delta^{i_1i_2}\delta_{j_1j_2}}{\lambda} + \sum_{n=1}^\infty \frac{1}{\lambda^{n+1}}\Big(\(\rho^{T_2}_R\)^n\Big)^{i_1i_2}_{j_1j_2},
\ea
where we denote the $R_1$ subsystem by upper $i$-type indices and $R_2$ by lower $j$-type indices. In the last line, we have expanded the expression in a formal power series in $1/\lambda$. Each term in the series is given by the $n$-replicated density matrix $\(\rho^{T_2}_R\)^n$, which defines a boundary condition with $n$ asymptotic boundaries with the brane indices contracted:
\be\nonumber
\includegraphics[width=\textwidth]{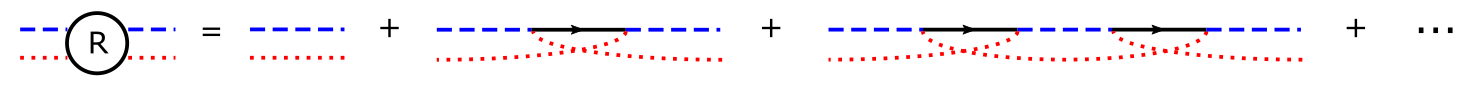}
\ee
where the blue (upper) dashed lines denote $i$-type index lines and red (lower) dotted lines denote $j$-type index lines. Each pair of blue/red index lines gives a factor of $1/\lambda$ and each asymptotic boundary gives a factor of $1/(kZ_1)$ coming from the normalization of the density matrix. Note that for up to two boundaries, the boundary conditions are the same after taking a final trace, with or without partial transpose.

In general, the gravitational path integral can be performed as a sum over bulk geometries satisfying the boundary conditions. In the JT model we introduced in Section~\ref{sec:jt}, higher genus corrections are highly suppressed by factors of $e^{-S_0} \ll 1$, so we only need to consider disjoint unions of disk geometries connecting any number of asymptotic boundaries. The path integral for the disk can be performed exactly including quantum corrections and is given by \eqref{eq:Zn}. For $n$ asymptotic boundaries, these disjoint unions of disk geometries are in one-to-one correspondence with elements of the permutation group $S_n$. As we show in Lemma~\re{lmplanar} and Corollary~\re{lmanti} of Appendix \ref{app:dom}, the only geometries that can possibly dominate in the limit $e^{S_0} \gg 1$ are the planar and anti-planar diagrams, which correspond to certain subsets of permutations in $S_n$. The term ``anti-planar" will be explained in detail in Section \ref{sec:antiplanar} and in Appendix~\ref{app:dom}. In fact, large regions of the phase diagram are dominated by either the planar or anti-planar geometries, which we will call the planar and anti-planar regimes. As we will now show, in each of these two regimes the resulting sum over diagrams can be resummed via a Schwinger-Dyson equation.

\subsection{Planar regime}

We now derive a resolvent equation in the parameter regime $k_2 \ll k_1 e^{S_0}$.  For reasons that will become clear shortly, we call it the planar regime.

Our strategy is to keep only the subset of geometries which have a possibility of dominating. As we outlined in Section~\ref{sec:domsaddles}, the phases in this regime away from phase transitions are defined by the disconnected, cyclic, and pairwise geometries. Closer to phase transitions, we expect more generic geometries which ``interpolate" between these three types of geometries to have a chance of dominating. Indeed, as we show in Lemma~\re{lmplanar} of Appendix \ref{app:dom}, the geometries which can dominate are precisely the planar diagrams, which correspond to the non-crossing partitions, i.e., the permutation group elements $g \in S_n$ lying on a geodesic between $\mathbbm{1}$ and $X$.

We can write the sum over planar geometries diagrammatically as
\be\nonumber
\includegraphics[width=\textwidth]{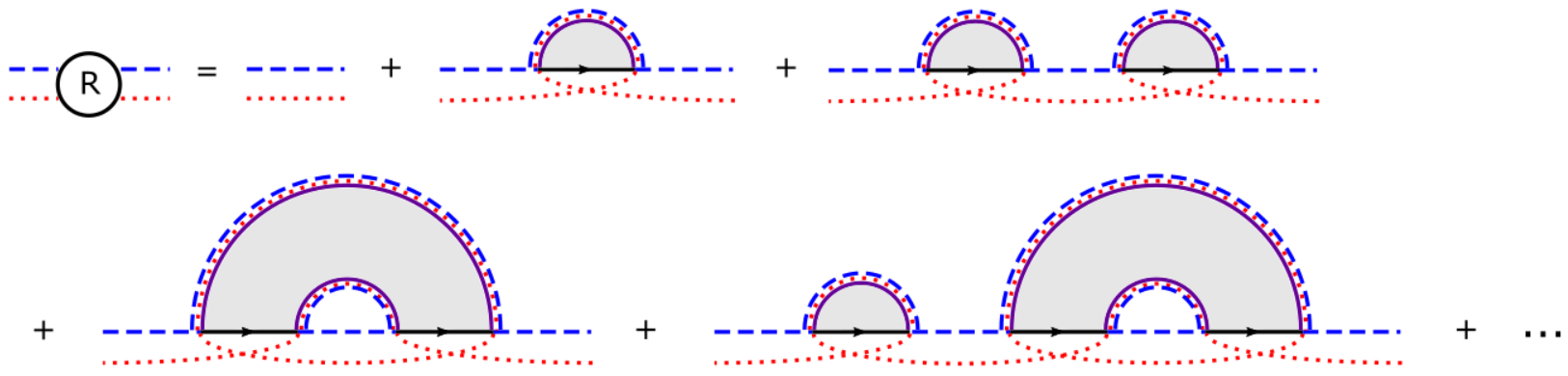}
\ee
This sum can be recast as a Schwinger-Dyson equation. Diagrammatically, we have
\be\nn
\includegraphics[width=\textwidth]{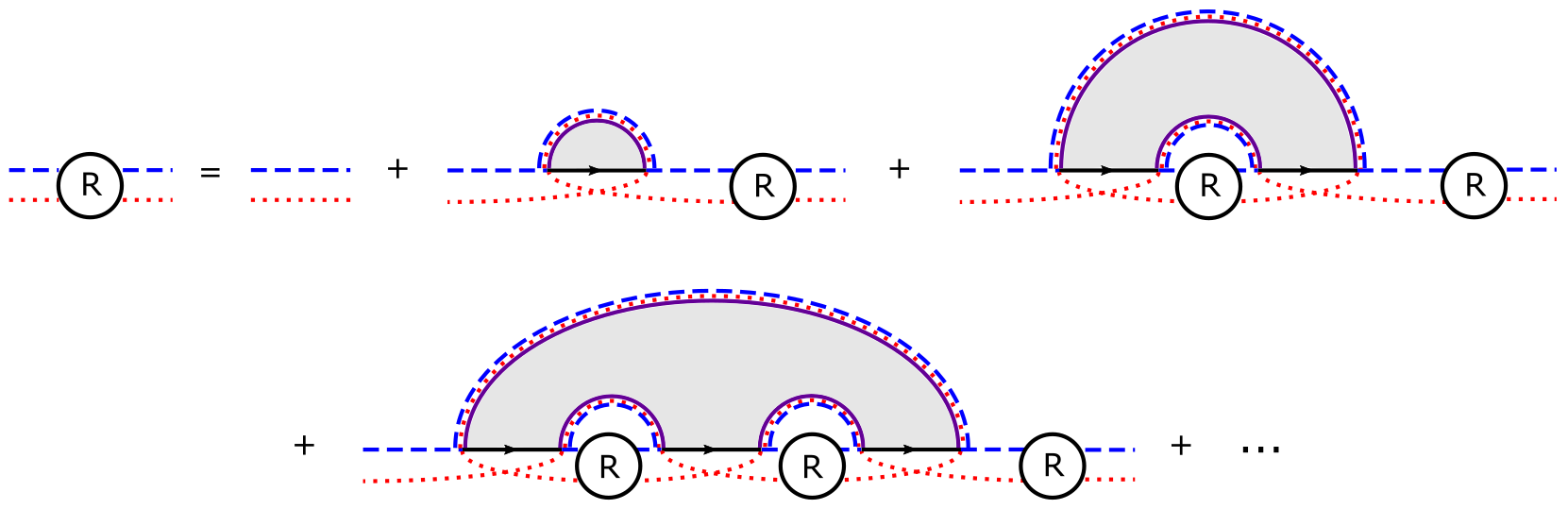}
\ee
or, as an equation,
\ba \label{eq:SD_planar}
R^{i_1i_2}_{j_1j_2} &= \frac{\delta^{i_1i_2}\delta_{j_1j_2}}{\lambda} + \frac{1}{\lambda} \sum_{n=1}^\infty \frac{Z_n}{\(kZ_1\)^n} \tilde{R}_{k_{n-1}k_1}\tilde{R}_{j_1k_2}\tilde{R}_{k_1k_3}\cdots \tilde{R}_{k_{n-3}k_{n-1}}R^{i_1i_2}_{k_{n-2}j_2},
\ea
where we have defined the partial trace of the resolvent matrix over the $R_1$ subsystem: $\tilde{R}_{j_1j_2} \equiv \sum_{i=1}^{k_1} R^{ii}_{j_1j_2}$, and repeated indices are summed over\footnote{Note that these repeated indices $k_1, k_2, \cd, k_{n-1}$ are $j$-type indices which should not be confused with the parameters $k_1$, $k_2$ that count the number of EOW brane states.}. Note that for $n =1$ the product of resolvents is simply $R_{j_1j_2}^{i_1i_2}$ and for $n = 2$ it is $\tilde{R}_{k_1k_1}R^{i_1i_2}_{j_1j_2}$. As is evident from the diagrammatics, the $i$-type indices denoting $R_1$ form simple self-contractions on all but the last resolvent, while the $j$-type indices denoting $R_2$ form a complicated set of contractions. Fortunately, this equation can be solved iteratively: starting with the leading solution $R^{i_1i_2}_{j_1j_2} = \delta^{i_1i_2} \delta_{j_1j_2}/\lambda + \mathcal{O}\(1/\lambda^2\)$, self-consistency then requires that $R^{i_1i_2}_{j_1j_2} \propto \delta^{i_1i_2} \delta_{j_1j_2}$ to all orders. As we will see, this allows us to rewrite the complicated product of resolvents as the following simple expression
\be \label{eq:prod_R}
\td{R}_{k_{n-1}k_1}\cdots \tilde{R}_{k_{n-3}k_{n-1}} R^{i_1i_2}_{k_{n-2}j_2} = \Bigg\{ \begin{array}{lll}
     & \(\fr{R}{k_2}\)^{n-1} R^{i_1i_2}_{j_1j_2}, &\quad n \text{ odd}, \\
     & k_2 \(\fr{R}{k_2}\)^{n-1} R^{i_1i_2}_{j_1j_2}, &\quad n \text{ even}.
\end{array}
\ee

Let us explain this in more detail. In general, the behavior for $n$ odd and even are different so we will need to treat these cases separately. To illustrate the simplification for the odd case, we first consider the contribution at $n=3$, which is the first non-trivial diagram under the partial transpose. We can write
\ba
\td{R}_{k_2k_1}\td{R}_{j_1k_2}R^{i_1i_2}_{k_1j_2} &= \(\td{R}_{j_1j_1}\)^2 R^{i_1i_2}_{j_1j_1} \delta_{k_2k_1} \delta_{j_1k_2} \delta_{k_1j_2} \nn \\
&= \(\td{R}_{j_1j_1}\)^2 R^{i_1i_2}_{j_1j_1} \delta_{j_1j_2} \nn \\
&= \(\td{R}_{j_1j_1}\)^2 R^{i_1i_2}_{j_1j_2}
\ea
where no summation on $j_1$ is implied. Now, recalling the full trace $R = \sum_{j} \tilde{R}_{jj}$, we find $\tilde{R}_{jj} = R/k_2$ and we can therefore write
\be
\td{R}_{k_2k_1}\td{R}_{j_1k_2}R_{k_1j_2}^{i_1i_2} = \(\fr{R}{k_2}\)^2 R_{j_1j_2}^{i_1i_2}.
\ee
We can understand the even case by looking at the first non-trivial contribution at $n = 4$. By a similar analysis as above, we have
\ba
\td{R}_{k_3k_1}\td{R}_{j_1k_2}\td{R}_{k_1k_3}R^{i_1i_2}_{k_2j_2} &= \(\td{R}_{j_1j_1}\)^3 R^{i_1i_2}_{j_1 j_1} \delta_{k_3k_1}\delta_{j_1k_2}\delta_{k_1k_3}\delta_{k_2j_2} \nn \\
&= k_2 \(\td{R}_{j_1j_1}\)^3 R^{i_1i_2}_{j_1j_2} \nn \\
&= k_2 \(\fr{R}{k_2}\)^3 R^{i_1i_2}_{j_1j_2}.
\ea
Note the additional factor of $k_2$ compared to $n=3$ due to the closed index loop formed from the first and third resolvent factors.

More generally, one can show that the even case always has a single index loop and the odd case has no index loops, leading to \eqref{eq:prod_R}. Using this, we can rewrite the Schwinger-Dyson equation \eqref{eq:SD_planar} as
\ba
&\lambda R^{i_1i_2}_{j_1j_2} = \delta^{i_1i_2}\delta_{j_1j_2} + k_2 \sum_{m = 1}^\infty \frac{Z_{2m-1}}{\(kk_2Z_1\)^{2m-1}}R^{2m-2} R^{i_1i_2}_{j_1j_2} + k_2^2 \sum_{m = 1}^\infty \frac{Z_{2m}}{\(kk_2Z_1\)^{2m}}R^{2m-1} R^{i_1i_2}_{j_1j_2}.
\nn \ea
Taking the full trace, we find
\be \label{eq:simple_SD_planar}
\lambda R = k + k_2\sum_{m = 1}^\infty \frac{Z_{2m-1}R^{2m-1}}{\(kk_2Z_1\)^{2m-1}} + k_2^2 \sum_{m = 1}^\infty \frac{Z_{2m}R^{2m}}{\(kk_2Z_1\)^{2m}}.
\ee
The gravitational partition function of the $n$-boundary totally connected geometry is given by \eqref{eq:Zn}, which we repeat here:
\begin{equation}
Z_n = e^{S_0}\int_0^\infty ds \rho(s) y(s)^n.
\end{equation}
Since this depends on $n$ only through $y(s)^n$, the sum over $n$ becomes a geometric series and \eqref{eq:simple_SD_planar} can be resummed into
\be \label{eq:resolvent_planar}
    \lambda R = k + k_2^2 e^{S_0} \int_0^{\infty} ds \rho(s) \frac{w(s)R(k+w(s)R)}{k^2 k_2^2 - w(s)^2 R^2},
\ee
where $w(s) \equiv y(s)/Z_1$. As a consistency check, when $k_2 = 1$ this reduces to the resolvent equation for the original (non-partially-transposed) density matrix derived in~\cite{Penington:2019kki}.

\subsection{Anti-planar regime} \label{sec:antiplanar}
We now consider a different parameter regime $k_1 \ll k_2 e^{S_0}$.  For reasons that will become clear shortly, we call it the anti-planar regime.  This anti-planar regime has a large overlap with the planar regime; the overlap region is $ e^{-S_0} \ll k_1 / k_2 \ll e^{S_0} $.  Together the two regimes cover the entire parameter space.

Once again, we will keep only the subset of geometries which have a possibility of dominating. As we outlined in Section~\ref{sec:domsaddles}, the phases in this regime away from phase transitions are defined by the disconnected, anti-cyclic, and pairwise geometries. Closer to phase transitions, we expect geometries which interpolate between these geometries to have a chance of dominating. Indeed, as we show in Corollary~\re{lmanti} of Appendix \ref{app:dom}, the geometries which can dominate are precisely the set of anti-planar diagrams, which are in one-to-one correspondence with permutations lying on a geodesic between $\mathbbm{1}$ and $X^{-1}$.  An example of an anti-planar geometry is the anti-cyclic geometry; two other (perhaps less obvious) examples are the disconnected and pairwise geometries.  Geometrically, anti-planar diagrams are precisely those that become planar diagrams after reversing the orientation of the asymptotic boundaries:
\be\nn
\includegraphics[width=0.7\textwidth]{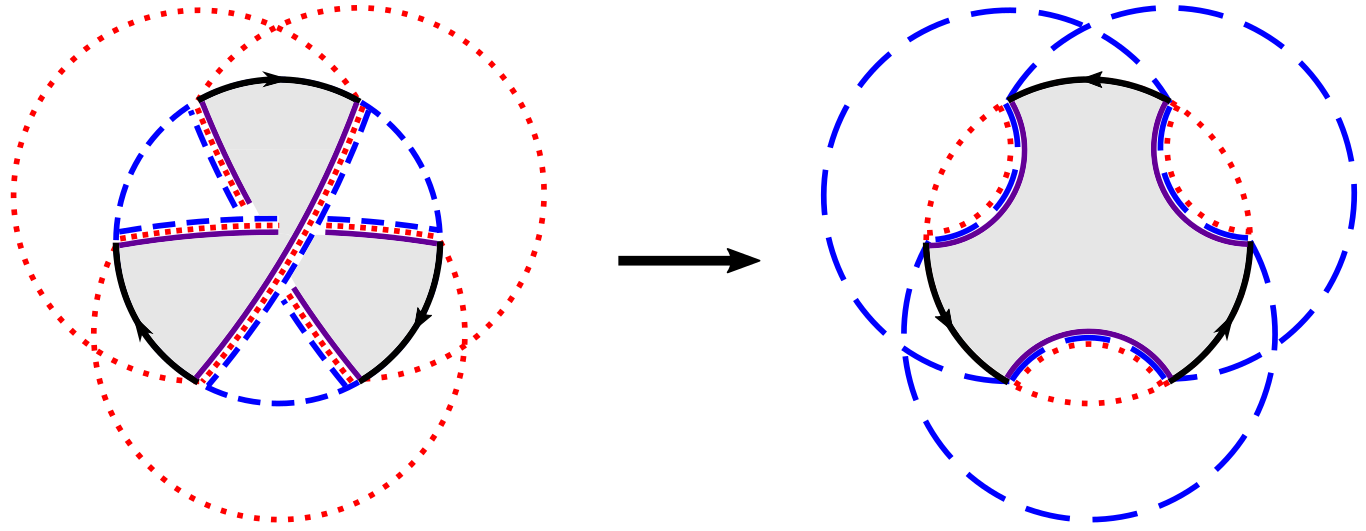}
\ee
The sum over anti-planar diagrams can similarly be recast as a Schwinger-Dyson equation. Diagrammatically, we have
\be\nn
\includegraphics[width=\textwidth]{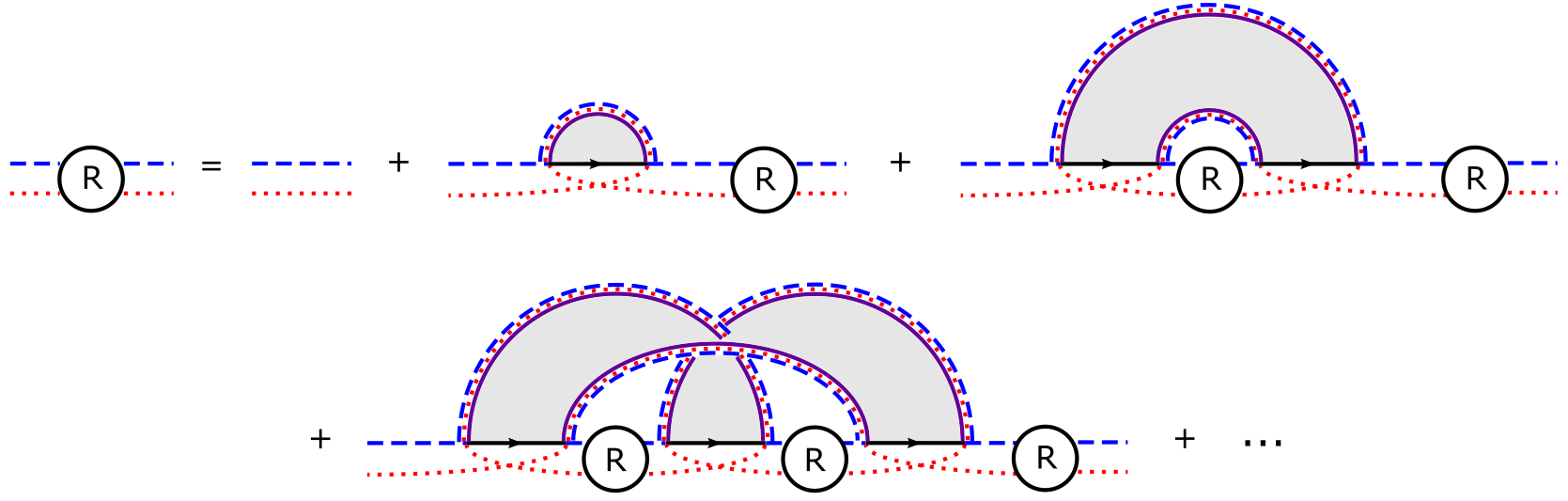}
\ee
or, as an equation,
\ba \label{eq:SD_antiplanar}
R^{i_1i_2}_{j_1j_2} &= \frac{\delta^{i_1i_2}\delta_{j_1j_2}}{\lambda} + \frac{1}{\lambda} \sum_{n=1}^\infty \frac{Z_n}{\(kZ_1\)^n} \tilde{R}^{k_{n-1}k_1}\tilde{R}^{i_1k_2}\tilde{R}^{k_1k_3}\cdots \tilde{R}^{k_{n-3}k_{n-1}}R_{j_1j_2}^{k_{n-2}i_2},
\ea
where we have defined the partial trace over the $R_2$ subsystem: $\tilde{R}^{i_1i_2} \equiv \sum_{j=1}^{k_2} R^{i_1i_2}_{jj}$. Note that the $n=1$ and $n=2$ terms are the same as in the planar case. Compared to the planar case, the $j$-type indices denoting $R_2$ now form simple self-contractions on all but the last resolvent, while the $i$-type indices denoting $R_1$ form the complicated set of contractions. In other words, the $i$-type indices now play the role of $j$-type indices in the planar regime, and vice-versa. As before, we can use the fact that $R^{i_1i_2}_{j_1j_2} \propto \delta^{i_1i_2} \delta_{j_1j_2}$ to all orders to rewrite the complicated product of resolvents as
\be
\td{R}^{k_{n-1}k_1}\cdots \tilde{R}^{k_{n-3}k_{n-1}} R_{j_1j_2}^{k_{n-2}i_2} = \Bigg\{ \begin{array}{lll}
     & \(\fr{R}{k_1}\)^{n-1} R^{i_1i_2}_{j_1j_2}, &\quad n \text{ odd}, \\
     & k_1 \(\fr{R}{k_1}\)^{n-1} R^{i_1i_2}_{j_1j_2}, &\quad n \text{ even}.
\end{array}
\ee
Using this, we can rewrite the Schwinger-Dyson equation \eqref{eq:SD_antiplanar} as
\ba
&\lambda R^{i_1i_2}_{j_1j_2} = \delta^{i_1i_2}\delta_{j_1j_2} + k_1 \sum_{m = 1}^\infty \frac{Z_{2m-1}}{\(kk_1Z_1\)^{2m-1}}R^{2m-2} R^{i_1i_2}_{j_1j_2} + k_1^2 \sum_{m = 1}^\infty \frac{Z_{2m}}{\(kk_1Z_1\)^{2m}}R^{2m-1} R^{i_1i_2}_{j_1j_2}.
\nn \ea
Taking the full trace, we find
\be \label{eq:simple_SD_antiplanar}
\lambda R = k + k_1\sum_{m = 1}^\infty \frac{Z_{2m-1}R^{2m-1}}{\(kk_1Z_1\)^{2m-1}} + k_1^2 \sum_{m = 1}^\infty \frac{Z_{2m}R^{2m}}{\(kk_1Z_1\)^{2m}}.
\ee
Finally, using \er{eq:Zn}, we resum \eqref{eq:simple_SD_antiplanar} into
\be
    \lambda R = k + k_1^2 e^{S_0} \int_0^{\infty} ds \rho(s) \frac{w(s)R(k+w(s)R)}{k^2 k_1^2 - w(s)^2 R^2}.
    \label{eq:Rint2}
\ee
As expected, this is simply the resolvent equation in the planar regime \eqref{eq:resolvent_planar} with the exchange $k_1 \leftrightarrow k_2$.

\section{Negativity spectrum near phase transitions}
\label{sec:trans}
Having derived the resolvent equation for the partial transpose, we now solve it to find the negativity spectrum near phase transitions.

For each negativity measure, we will be interested in a neighborhood near one of the phase transitions as we tune the relative sizes of our parameters, and we will analyze the corrections to the negativity measures listed in Table \ref{table}. Generically these corrections take the form of fluctuations about a fixed saddle point in the gravitational path integral. Near a phase transition, however, multiple saddles are competing for dominance, so enhanced corrections, i.e.\ corrections larger than those for any individual saddle, provide additional information about the entanglement structure of the system.

\subsection{Microcanonical ensemble}
\label{sec:MC}
Before studying the negativity spectrum in more detail, let us take a brief detour and consider the situation where the black hole is in a microcanonical ensemble in the JT model. In this case, we restrict to some small energy window $[s, s+\Delta s]$.  We write
\be
e^\bfS = \bfRho(s) \Delta s, \qquad \textbf{Z}_n = \bfRho(s) y(s)^n \Delta s, \qquad \textbf{w}(s) = \frac{y(s)}{\textbf{Z}_1} = e^{-\bfS},
\ee
where $\bfRho(s) \equiv e^{S_0} \rho(s)$ is the density of states. In this case, it can be shown that the sum over geometries for the R\'enyi negativities \eqref{eq:renyi_neg_exact} is given by
\ba \label{eq:microcanonical}
\Tr(\rho_R^{T_2})^n &= \frac{1}{(k e^\bfS)^n} \sum_{g \in S_n} \(e^\bfS\)^{\chi(g)} k_1^{\chi(g^{-1} X)} k_2^{\chi(g^{-1} X^{-1})}.
\ea
We recognize this as the $n^\textrm{th}$ R\'enyi negativity of a Wishart matrix with $e^\bfS$ degrees of freedom. This implies that $\rho^{T_2}_R$ in a microcanonical ensemble can be thought of as the partial transpose of a random matrix drawn from the Wishart distribution. A similar expression for the moments of the partial transpose of a random mixed state was derived in~\cite{Shapourian:2020mkc}. The similarity between the microcanonical JT model and a random mixed state was also noted in~\cite{Kudler-Flam:2021efr}.

\begin{figure}
\centerline{
\includegraphics[width=0.5\textwidth]{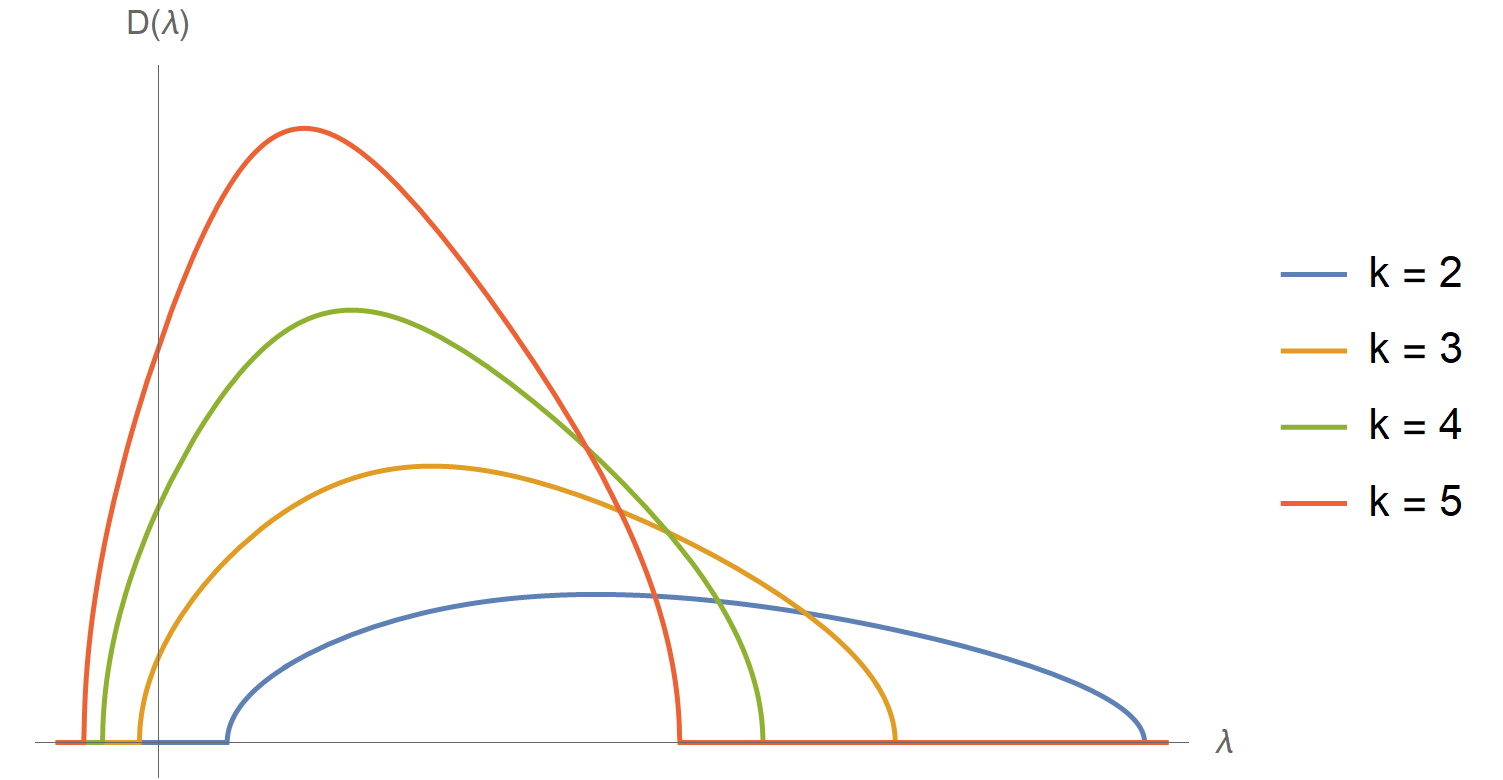}
\includegraphics[width=0.5\textwidth]{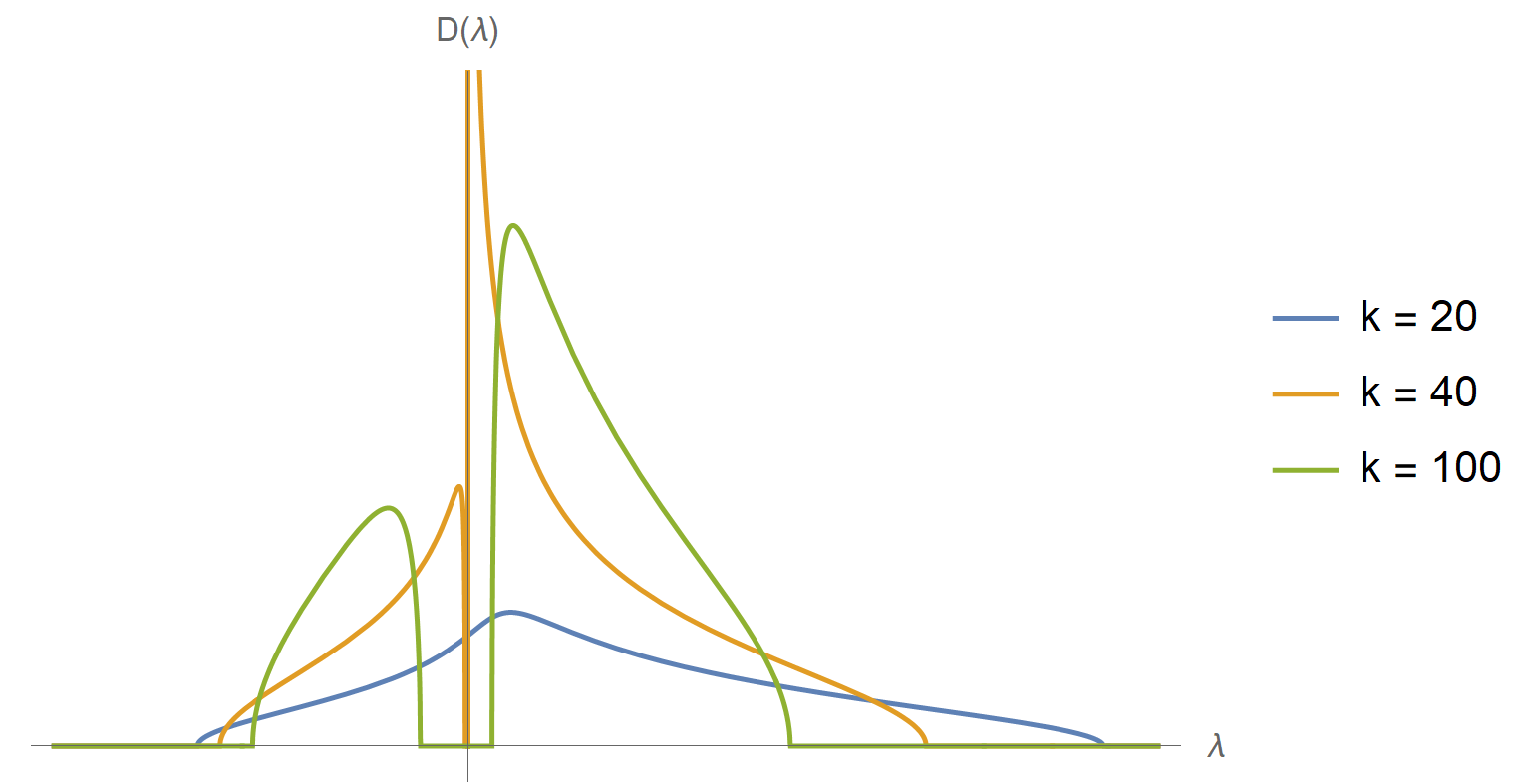}
}
\caption{Negativity spectrum in the microcanonical ensemble calculated from the cubic resolvent equation~\eqref{eq:MCsimple}. We fix $e^{\bfS} = 10$, $k_2 = 2$, and move in a horizontal line in the phase diagram by tuning $k$. The disconnected-to-pairwise transition (left) occurs at $k = e^{\bfS} = 10$, though we only plot to $k=5$ for visual clarity, as the qualitative behavior is the same. The pairwise-to-cyclic transition (right) occurs at $k = k_2^2 e^{\bfS} = 40$.}
\label{fig:MC}
\end{figure}

In the planar regime $k_1 \gg k_2 e^{-S_0}$, the resolvent equation \eqref{eq:resolvent_planar} becomes
\be
R^3 + \left( \frac{e^{\bfS}k_2^2 - k}{\lambda}\right)R^2 + e^{2\bfS}k k_2^2\left( \frac{1}{\lambda} - k \right)R + \frac{e^{2\bfS}k^3 k_2^2}{\lambda} = 0.
\label{eq:MCsimple}
\ee
This matches the resolvent equation derived in~\cite{Shapourian:2020mkc} for a random mixed state under appropriate rescaling of variables.\footnote{To match to the resolvent equation in~\cite{Shapourian:2020mkc}, define the rescaled variables $z = k_2 e^\bfS \lambda$ and $G = e^{-\bfS}R/k k_2$ so that \eqref{eq:MCsimple} becomes
\be
z G^3 + (\beta - 1) G^2 + (\alpha - z) G + 1 = 0,
\ee
where $\alpha = e^\bfS/k_1$ and $\beta = k_2 e^\bfS/k_1$. This cubic equation was earlier noted in the context of free probability theory \cite{banica2013asymptotic}.} As was shown there, a closed-form solution to this cubic equation for $R$ can be found, and leads to concrete results for the negativity spectrum and various negativity measures. The resolvent for the anti-planar regime $k_2 \gg k_1 e^{-S_0}$ can be obtained by $k_1 \leftrightarrow k_2$.

We plot the eigenvalue density in the microcanonical ensemble for various parameter values in Figure \ref{fig:MC}. The spectrum is approximately a Wigner semicircle distribution in the disconnected phase, continues to be connected in the pairwise phase, develops singularities at the pairwise-to-cyclic transition, and has two branches in the cyclic phase, where it is well approximated by the difference of two disjoint Marchenko-Pastur distributions.

\subsection{Canonical ensemble: disconnected-pairwise transition}\la{sec:dispair}

The transitions that involve the cyclic and anti-cyclic phases are complicated, as they involve a sum over diagrams with pieces connecting more than two asymptotic boundaries. Here, we will focus on the transition between the totally disconnected phase and the pairwise connected phase. The disconnected phase involves single-boundary diagrams, while the pairwise phase involves pairwise connected wormholes (plus a single disconnected piece for odd $n$).

The disconnected-pairwise transition happens within the large overlap $e^{-S_0} \ll k_1/k_2 \ll e^{S_0}$ between the planar regime and the anti-planar regime\footnote{In other words, we stay away from the two triple points on the phase diagram, as we would need to analyze transitions to the (anti-)cyclic phase there.}.  Therefore, the dominant geometries are those that are simultaneously planar and anti-planar.  As we show in Appendix~\re{app:dom}, these geometries are disjoint, non-crossing unions of single-boundary disks and pairwise connected wormholes.  This result is the content of Lemma~\re{lmdispair} in Appendix~\re{app:dom}.

Intuitively, these dominant geometries interpolate between the disconnected and pairwise geometries.  Geometries with pieces connecting more than two asymptotic boundaries are parametrically suppressed. As such, the resolvent equation~\er{eq:simple_SD_planar} truncates at the quadratic order:
\be
\lambda R = k + \frac{R}{k} + \frac{Z_2 R^2}{(kZ_1)^2}.
\ee
This quadratic equation can be solved analytically giving the resolvent and eigenvalue density as
\ba
R(\lambda) &= \frac{2k}{A^2} \left( \lambda - \frac{1}{k} - \sqrt{\lambda - \frac{1}{k} + A} \; \sqrt{\lambda - \frac{1}{k} - A} \right), \nonumber \\
D(\lambda) &= \frac{2k}{\pi A^2} \sqrt{A^2 - \left(\lambda - \frac{1}{k}\right)^2},
\label{eq:easyD}
\ea
where $A^2 \equiv 4Z_2/(kZ_1^2)$. Thus the eigenvalue density is a Wigner semicircle distribution supported on $\lambda \in \[-A+\fr{1}{k}, A+\fr{1}{k}\]$. For $k \leq 1/A$, $D(\lambda)$ only has support on $\lambda \geq 0$ and the negativity vanishes; for $k > 1/A$, $D(\lambda)$ has support on $\lambda < 0$ and we find the negativity is non-vanishing. The phase transition thus occurs at $k = 1/A \sim e^{S_0}$, as expected from the schematic analysis in Section \ref{sec:domsaddles}.

\begin{figure}
\centering
\includegraphics[width=.6\textwidth]{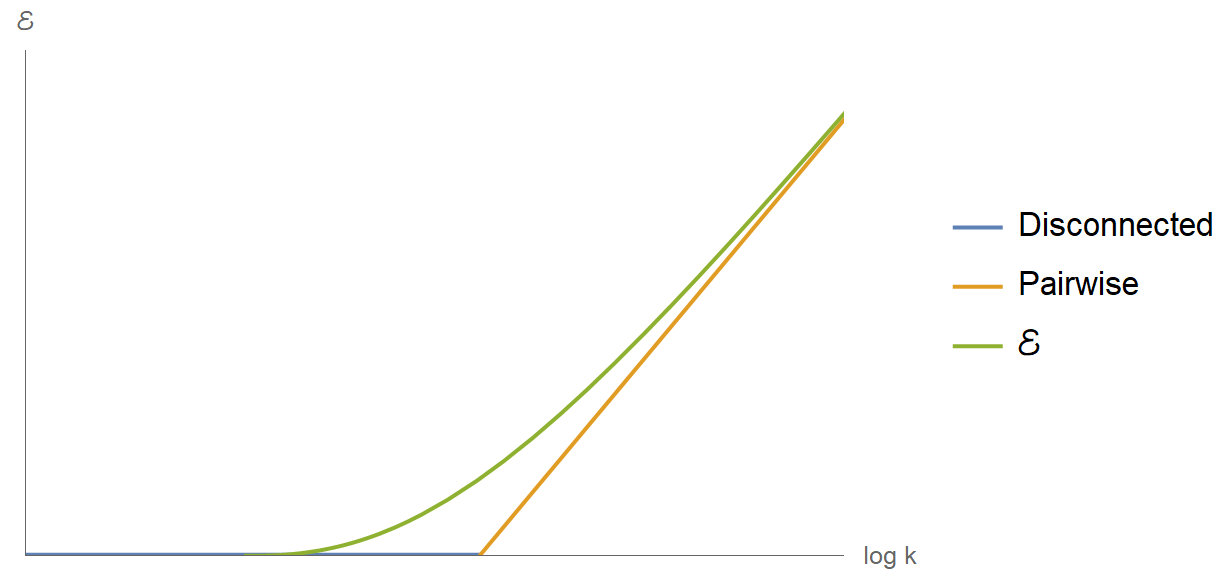}
\caption{Logarithmic negativity near the disconnected-to-pairwise transition with $Z_2/Z_1^2 = e^{-5}$, along with the naive answers in the dominant regions: $\mathcal{E}_{\textrm{disconnected}} = 0$ and $\mathcal{E}_{\textrm{pairwise}} = \frac{1}{2}\left( \log k + \log \frac{Z_2}{Z_1^2}\right) + \log \frac{8}{3 \pi}$, where $\log \frac{8}{3 \pi}$ is an $\mathcal{O}(1)$ term arising from the analytic continuation of the Catalan number in $\mathcal{N}^{\textrm{(even)}}_{2m}$ to $m = 1/2$ \cite{Dong:2021clv}.}
\label{fig:easyTrans}
\end{figure}

We can write an explicit expression for the logarithmic negativity using \eqref{eq:logneg} and \eqref{eq:easyD}:
\ba
\mathcal{E} &= \log \int_{-\infty}^\infty d \lambda D(\lambda) \abs{\lambda} \\
&= \log  \left[ \frac{2}{3 \pi} \left( \frac{\sqrt{A^2k^2-1}\left( 2A^2 k^2 + 1 \right)}{A^2 k^2} + 3 \csc^{-1} \left( A k \right)  \right) \right] \Theta \left( k - \frac{1}{A}\right).
\ea
We plot this result in Figure \ref{fig:easyTrans}, along with the naive answers for logarithmic negativity in the dominant phases.

How large is the correction at the transition? It is easy to verify that it is $\cO(1)$. There are no enhanced corrections here, as we are working in a regime where higher order terms in the Schwinger-Dyson equation are parametrically suppressed. The logarithmic negativity, along with all other negativity measures, never receives contributions from geometries containing pieces with $Z_{n>2}$, so corrections are $\mathcal{O}(1)$.

\subsection{Canonical ensemble: cyclic-pairwise transition}

In this subsection we will study the richer phase transition between the pairwise phase and the cyclic phase.\footnote{The pairwise to anti-cyclic transition follows from the calculation in this subsection by exchanging $k_1 \leftrightarrow k_2$.} Our computation is inspired by that of Appendix F of \cite{Penington:2019kki}.

First, let us define some useful values of $s$. In the semiclassical $\beta \ll 1$ and large brane mass $\mu \gg 1/\beta$ limits, we have
\be
\rho(s) y(s)^n \sim \frac{s}{2\pi^2} y(0)^n e^{2 \pi s - n \beta s^2/2}.
\label{eq:rhosemiclassical}
\ee
This means that the integral that defines $Z_n$ in \eqref{eq:Zn} can be well approximated by the saddle point located at
\ba
s^{(n)} &= \frac{2\pi}{n \beta} + \mathcal{O}(1) \nonumber \\
\Rightarrow \log Z_n &\approx S_0 + \frac{2 \pi^2}{n \beta} + \mathcal{O}(\log \beta).
\label{eq:Znapprox}
\ea
Throughout, we take our parameters $S_0$, $k_1$, and $k_2$ to be large before taking the semiclassical limit, such that e.g.\ $\log Z_n \approx S_0$ as $S_0 \gg 1/\beta$. We will also need to define $s_k$, the value of $s$ for which
\be
k = k_2^2 e^{S_0} \int_0^{s_k} ds \rho(s).
\label{eq:skdef}
\ee
We can approximate $s_k$ by
\be
s_k \approx \frac{1}{2\pi} \log \left( \frac{k}{k_2^2} \right) - S_0 + \mathcal{O}(1).
\label{eq:skapprox}
\ee
Note that here we are considering the values of $k$ and $k_2$ at transition, so the particular values of $s_k$ we are interested in will depend on the details of the negativity measure we are computing. In our schematic analysis where $Z_n \sim e^{S_0}$, we derived the location of the phase transition between the cyclic and pairwise phase and found that it was independent of $n$. However, taking into account dependence on $\beta$, the $Z_n$'s are distinct for different $n$, which leads to $n$-dependent transition points. The R\'enyi negativities in the cyclic and pairwise phases are given by the contributions of the dominant geometries in each phase (see Table \ref{table}), and coincide at transition. Equating their contributions at the transition, we find, up to factors $\mathcal{O}(1)$ in $\beta$,
\ba
\mathcal{N}_{2m}^{\textrm{(even)}} &= \frac{Z_{2m}}{k_2^{2m-2}Z_1^{2m}} = \frac{Z_2^m}{k^{m-1}Z_1^{2m}}, \nonumber \\
\mathcal{N}_{2m-1}^{\textrm{(odd)}} &= \frac{Z_{2m-1}}{k_2^{2m-2}Z_1^{2m-1}} = \frac{Z_2^{m-1}}{k^{m-1} Z_1^{2m-2}}.
\ea
In terms of the approximation \eqref{eq:Znapprox}, we can solve for $\log \(k/k_2^2\)$ at transition to obtain
\ba
&\textrm{Even:} && \log \(\frac{k}{k_2^2}\) = \log \left( \frac{Z_2^m}{Z_{2m}} \right)^{\frac{1}{m-1}} = S_0 + \left( 1+ \frac{1}{m} \right) \frac{\pi^2}{\beta} + \mathcal{O}\(\log \beta\) \nonumber \\
&\textrm{Odd:} && \log \(\frac{k}{k_2^2}\) = \log \left( \frac{Z_2^{m-1}Z_1}{Z_{2m-1}} \right)^{\frac{1}{m-1}} = S_0 + \left( 1+ \frac{4}{2m-1} \right) \frac{\pi^2}{\beta} + \mathcal{O}\(\log \beta\).
\ea
From this, we can solve for $s_k$ at the transition using \eqref{eq:skapprox} to obtain
\ba
s_k^{\textrm{($n$, even)}} &\approx \frac{\pi}{2 \beta} \left(1+\frac{2}{n} \right)  + \mathcal{O}\(\log \beta\) \nn\\
s_k^{\textrm{($n$, odd)}} &\approx \frac{\pi}{2 \beta} \left(1+\frac{4}{n} \right)+ \mathcal{O}\(\log \beta\).
\label{eq:skn}
\ea
As expected, the transition point depends on $n$. In particular, it is $\mathcal{O}(1/\beta)$ at leading order and bounded below as a function of $n$.

For this phase transition, we will fix $k$ and tune $k_2$. In the phase diagram, this corresponds to moving along a line between the upper left corner and the lower right corner. We need to consider diagrams with pieces made of an arbitrary number of boundaries, and we can restrict ourselves to planar diagrams, as anti-planar diagrams are suppressed by factors of $k_2/k_1$ relative to their planar counterparts. The resolvent equation is again \eqref{eq:Rint2}:
\ba
\lambda R = k + k_2^2 e^{S_0} \int_0^{\infty} ds \rho(s) \frac{w(s)R(k+w(s)R)}{k^2 k_2^2 - w(s)^2 R^2}.
\ea
We are going to split this integral at some transition $s_t$ such that
\be
\lambda R = k + k_2^2 e^{S_0} \int_0^{s_t} ds \rho(s) \frac{w(s)R(k + w(s)R)}{k^2k_2^2 - w(s)^2R^2} + k_2^2 e^{S_0} \int_{s_t}^\infty ds \rho(s) \frac{w(s)R(k + w(s)R)}{k^2k_2^2 - w(s)^2R^2}.
\label{eq:mideq}
\ee
We rewrite this simple step to emphasize that no approximations have been used yet. We are now going to use a set of three assumptions:
\begin{enumerate}
    \item $w(s_t)R \ll k k_2$.
    \item $k_2^2 e^{S_0} \int_0^{s_t} ds \rho(s) \frac{w(s)R(k + w(s)R)}{k^2k_2^2 - w(s)^2R^2} \ll k$.
    \item $s_t = s_k - \kappa$, where $\kappa$ is $\mathcal{O}(1)$ but large.
\end{enumerate}
These assumptions are justified in detail in Appendix \ref{app:trans}, where we show that the resulting simplifications to the resolvent equation give a self-consistent treatment of the problem. For now we will take these as facts and proceed. The first approximation allows us to simplify the final term in \eqref{eq:mideq} such that
\be
\lambda R \approx k + k_2^2 e^{S_0} \int_0^{s_t} ds \rho(s) \frac{w(s)R(k + w(s)R)}{k^2k_2^2 - w(s)^2R^2} + \frac{e^{S_0}}{k} \int_{s_t}^\infty ds \rho(s) w(s) R,
\ee
where we have dropped an $R^2$ term from the last integral because it can be shown to be much smaller than the leading term $k$ using Assumptions 1 and 3.
We define the coefficient of $R$ in the last term to be the constant $\lambda_0$, given by
\be
\lambda_0 \equiv \frac{e^{S_0}}{k} \int_{s_t}^\infty ds \rho(s) w(s).
\ee
We can now write the resolvent equation as 
\be
(\lambda - \lambda_0) R \approx k + k_2^2 e^{S_0} \int_0^{s_t} ds \rho(s) \frac{w(s)R(k + w(s)R)}{k^2k_2^2 - w(s)^2R^2} .
\label{eq:mideq2}
\ee
Now we turn to Assumption~2, which allows us to treat the second term above as a perturbation to the zeroth order solution
\be
R \approx R_0 = \frac{k}{\lambda - \lambda_0}.
\ee
Plugging this solution back into \eqref{eq:mideq2}, we obtain the first order iterated solution
\ba
R_1 &\approx \frac{k}{\lambda - \lambda_0} + \frac{k_2^2e^{S_0}}{\lambda - \lambda_0} \int_0^{s_t} ds \rho(s) \frac{w(s)R_0(k+w(s)R_0)}{k^2 k_2^2 - w(s)^2R_0^2} \nonumber \\
&\approx \frac{k}{\lambda - \lambda_0} + e^{S_0} \int_0^{s_t} ds \frac{1}{\lambda - \lambda_0} \frac{\rho(s)w(s)(\lambda - \lambda_0 + w(s))}{(\lambda - \lambda_0)^2 - \left( \frac{w(s)}{k_2}\right)^2}.
\ea
Now we can find the discontinuity in this expression and extract $D(\lambda)$. There are three contributions to the spectrum: a simple pole at $\lambda = \lambda_0$, and a pair of branch cuts given by the poles at $\lambda = \lambda_0 \pm w(s)/k_2$ in the integrand. We obtain
\bm
D(\lambda) = \# \delta(\lambda - \lambda_0) + e^{S_0} \int_0^{s_t} ds \rho(s) \left[ \frac{k_2(k_2+1)}{2} \delta\left(\lambda - \lambda_0 - \frac{w(s)}{k_2}\right) \right.\\
\left.+ \frac{k_2(k_2-1)}{2} \delta \left(\lambda - \lambda_0 + \frac{w(s)}{k_2} \right) \right].
\label{eq:Dapprox1}
\em
\begin{figure}
    \centering
    \includegraphics[width=\textwidth]{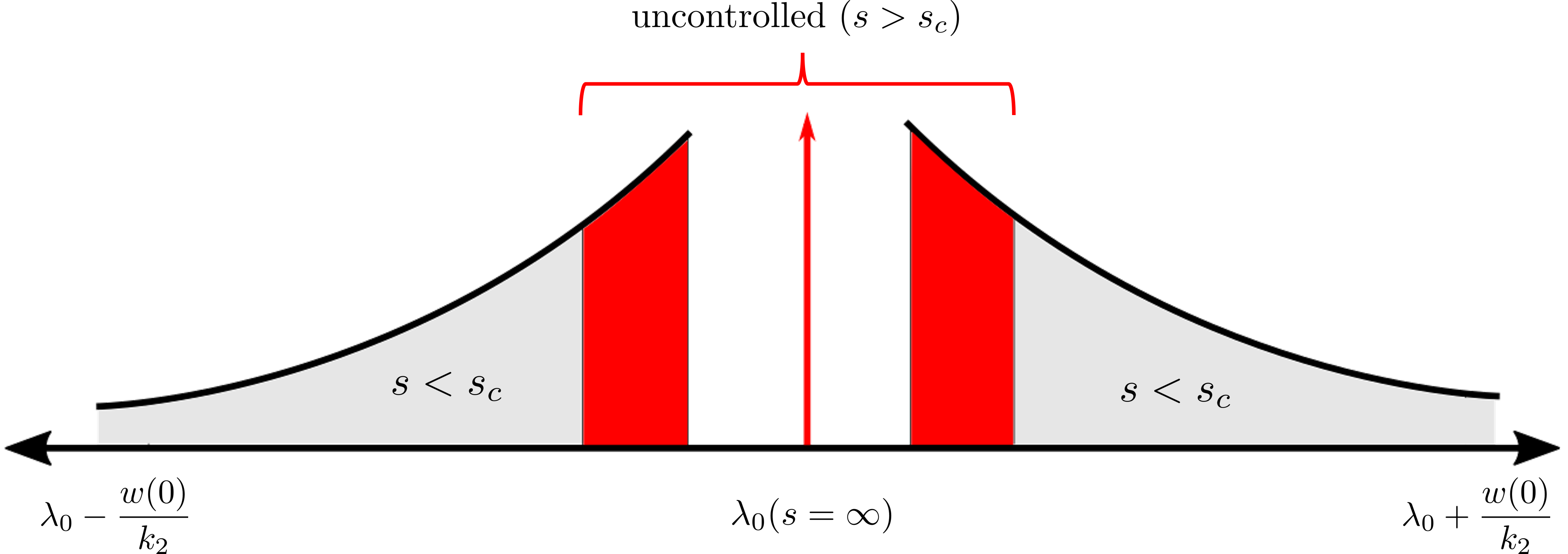}
    \caption{A rough sketch (not to scale) of the negativity spectrum near the cyclic-to-pairwise transition. }
    \label{fig:specapprox}
\end{figure}Let us pause for a second to unpack this equation. The spectrum consists of a delta function located at $\lambda_0$ from the simple pole and two regions of nonzero eigenvalue density from the integrated delta functions. We plot a sketch of this eigenvalue density in Figure \ref{fig:specapprox}. There are two distinct regions with nonzero eigenvalue density, similar to the spectrum in the microcanonical ensemble. One point we emphasize in Appendix \ref{app:trans} is the presence of a ``controlled'' region $0 < s < s_c$ in which our assumptions hold and an ``uncontrolled'' region $s > s_c$ where we claim ignorance about the spectrum. In terms of the eigenvalues, this corresponds to an ignorance in the spectrum for a region
\be
\lambda \in \left[ \lambda_0 - \frac{w(s_c)}{k_2}, \lambda_0 + \frac{w(s_c)}{k_2} \right].
\ee
We show that our ignorance about the uncontrolled region leads to at most $\cO(1)$ multiplicative corrections to the R\'enyi negativities, or $\cO(1)$ additive corrections to $\mathcal{E}$, $S^{T_2}$, and $S^{T_2(2)}$, due to the constraint that the total number of eigenvalues must be $k$.

Clearly, $\lambda_0$ lies within this uncontrolled region, so we should not take seriously the presence of the delta function at $\lambda_0$. In fact, we will now show that this delta function vanishes if we extend the upper limit of the integral from $s_t$ to $s_k$ in~\er{eq:Dapprox1} (which only affects the uncontrolled region and therefore causes a small error). Our density matrix has a total of $k$ eigenvalues and is unit normalized, which translates into conditions on the zeroth and first moments of $D(\lambda)$, namely
\be
\int_{-\infty}^\infty d\lambda D(\lambda) = k, \; \; \; \; \int_{-\infty}^\infty d\lambda D(\lambda) \lambda = 1.
\ee
We see that these conditions are satisfied by \eqref{eq:Dapprox1} if we replace $s_t$ by $s_k$, compensating for the fact that $s_t = s_k - \kappa$ by sending the coefficient of the $\delta (\lambda - \lambda_0)$ piece to zero. We conclude that a good approximation for the spectrum of the partially transposed density matrix at transition is given by 
\be
D(\lambda) =  e^{S_0} \int_0^{s_k} ds \rho(s) \left[ \frac{k_2(k_2+1)}{2} \delta\left(\lambda - \lambda_0 - \frac{w(s)}{k_2}\right) + \frac{k_2(k_2-1)}{2} \delta \left(\lambda - \lambda_0 + \frac{w(s)}{k_2} \right) \right]
\label{eq:NegDApprox}
\ee
In order to simplify our calculations for negativity, we would like that the delta function branch cuts were purely positive or negative, which is equivalent to the condition $\lambda_0 < w(s_k)/k_2$. By definition, $\lambda_0$ is bounded above by $1/k$, and at transition we have $k/k_2^2 = e^{S_0 + 2 \pi s_k}$. In the semiclassical limit, we can use the approximation
\be
w(s_k) \approx \frac{e^{-\beta s_k^2/2}}{Z_1} \approx e^{-S_0 -\beta s_k^2/2 - 2\pi^2/\beta}.
\ee
Our condition on the branch cuts becomes
\be
\frac{kw(s_k)}{k_2} > 1 \qu \Rightarrow \qu k_2 e^{S_0+2\pi s_k} w(s_k) \approx k_2 e^{2 \pi s_k - \beta s_k^2/2 - 2\pi^2/\beta} \sim k_2e^{C/\beta} \gg 1
\label{eq:wapprox2}
\ee
as $s_k \sim 1/\beta$ for a generic $n$. This is satisfied even if the unknown order one constant is negative under our previous assumption that we take our counting parameters to be large before taking small $\beta$, such that $\log k_2 \gg \frac{1}{\beta}$.

Now that we have the spectrum at transition, we are ready to calculate the corrections to any negativity measure we want! Let us start with the logarithmic negativity, which has a transition located at
\be
s_k^{(1, \textrm{even})} = \frac{3 \pi}{2 \beta}.
\ee
We find
\ba
\mathcal{E} &= \log \int_{-\infty}^{\infty} d\lambda D(\lambda) \abs{\lambda} \nonumber \\
&= \log e^{S_0} \int_0^{s_k} ds \rho(s) \left[ \frac{k_2(k_2+1)}{2} \left( \lambda_0 + \frac{w(s)}{k_2} \right) + \frac{k_2(k_2-1)}{2} \left( \frac{w(s)}{k_2} - \lambda_0 \right) \right] \nonumber \\
&= \log \left( \frac{1}{k_2} + k_2 e^{S_0} \int_0^{s_k} ds \rho(s)w(s) \right).
\ea
As $s_k < s^{(1)}$ for logarithmic negativity, we have approximated $\lambda_0 \approx 1/k$, and the final integral is well approximated by its maximum value $\rho(s_k) w(s_k)$. The logarithmic negativity is then
\ba
\mathcal{E} &\approx \log \left( \frac{1}{k_2} + k_2 e^{S_0} \rho(s_k)w(s_k) \right) \nonumber\\
&\approx \log \left( \frac{1}{k_2} + \frac{k w(s_k)}{k_2}\right) \nonumber \\
&\approx \log k_2 - \frac{\pi^2}{8 \beta}. 
\ea
In the second line we used our previous approximation \eqref{eq:skapprox} for $s_k$, and in the third line we used \eqref{eq:wapprox2}. As we see, the logarithmic negativity experiences an $\mathcal{O}(1/\beta)$ correction to the naive answer $\mathcal{E} = \log k_2$.\\

Where do we expect $\mathcal{O}(1/\sqrt{\beta})$ corrections? In the case of even R\'enyi negativities, this happens at
\be
s_k^{(n,\textrm{even})} = s^{(n)} \Rightarrow n = 2, \; \;  s^{(2)} = \frac{\pi}{\beta}.
\ee
We can check this explicitly for a negativity measure descending from the even analytic continuation. The simplest such measure, the R\'enyi-2 negativity $\mathcal{N}_2^{(\textrm{even})}$, is related to the second R\'enyi entropy $S_2$ by
\be
\mathcal{N}_2^{(\textrm{even})} = e^{-S_2}
\ee
and therefore comes with $\mathcal{O}(1)$ corrections. We instead turn to the refined R\'enyi-2 negativity $S^{T_2(2)}$, defined in \eqref{eq:secondrefined}. We can read off the naive answer for $S^{T_2(2)}$ from Table \ref{table}, again using the approximation \eqref{eq:Znapprox}. We find
\be
S^{T_2(2)} \approx 2 \log k_2 + S_0 + \frac{2\pi^2}{\beta} .
\ee
To compute $S^{T_2(2)}$, we first need to compute $\sum_i \lambda_i^2 = \mathcal{N}_2^{(\textrm{even})}$. At  transition, the naive answer is given by Table \ref{table}, where $\mathcal{N}_2^{(\textrm{even})} = Z_2/Z_1^2$. However, using \eqref{eq:NegDApprox}, we find
\ba
\mathcal{N}_2^{(\textrm{even})} &= \int_{-\infty}^\infty d \lambda D(\lambda) \lambda^2 \nonumber \\
&= e^{S_0} \int_0^{s_k} ds \rho(s) \left( k_2^2 \lambda_0^2 + 2 \lambda_0 w(s) + w(s)^2 \right)
\ea
Again, $s_k < s^{(1)}$, so $\lambda_0 \approx 1/k$. This integral gives
\ba
\mathcal{N}_2^{(\textrm{even})} &= \frac{1}{k} + \frac{2e^{S_0}\rho(s_k)w(s_k)}{k} + \frac{Z_2}{2Z_1^2} \nonumber \\
&\approx \frac{1}{k} + \frac{2e^{-\pi^2/2\beta}}{k} + \frac{Z_2}{2Z_1^2}.
\ea
In the limit $k \gg e^{S_0}$, we can safely ignore the first two terms, and the R\'enyi-2 negativity becomes
\be
\mathcal{N}_2^{(\textrm{even})} \approx \frac{Z_2}{2Z_1^2} \approx \frac{\sqrt{\pi}}{4}\beta^{3/2} e^{-S_0 - 3 \pi^2 / \beta}.
\ee
The factor of $1/2$ out front may seem like a problem, as we do not reproduce the naive answer for $\mathcal{N}^{\textrm{(even)}}_2$. However, as the refined R\'enyi negativities are functions of $\log \mathcal{N}_n$, this factor will only contribute an $\mathcal{O}(1)$ difference from the true answer for $S^{T_2(2)}$, and we are safe in using this approximation. The refined R\'enyi-2 negativity is therefore given by
\ba
S^{T_2(2)} = &-\int_{-\infty}^\infty d\lambda D(\lambda) \frac{\lambda^2}{\mathcal{N}_2^{(\textrm{even})}} \log \frac{\lambda^2}{\mathcal{N}_2^{(\textrm{even})}} \nonumber \\
= &-\frac{e^{S_0}}{\mathcal{N}_2^{(\textrm{even})}} \int_0^{s_k} ds \rho(s)  \Biggl( \frac{k_2(k_2+1)}{2}\left(\lambda_0+\frac{w(s)}{k_2} \right)^2 \log \left(\lambda_0+\frac{w(s)}{k_2} \right)^2  \nonumber \\   
&+ \frac{k_2(k_2-1)}{2}\left(\lambda_0-\frac{w(s)}{k_2} \right)^2 \log \left(\lambda_0-\frac{w(s)}{k_2} \right)^2 \Biggr) \nonumber \\
&+ \log \mathcal{N}_2^{(\textrm{even})}
\ea
Again, as $\lambda_0 \approx 1/k$, the dominant contribution to this integral will come from the $w(s)^2$ term, which is where the enhanced transition should come in. Previously, using \eqref{eq:rhosemiclassical}, we showed that an integral of the form $\rho(s)w(s)^n$ can be approximated by a sharply peaked Gaussian with mean $s^{(n)}$ and standard deviation $1/\sqrt{n\beta}$, up to normalization. We use these simplifications to obtain
\ba
& -\frac{e^{S_0}}{\mathcal{N}_2^{(\textrm{even})}} \int_0^{s_k} ds \rho(s) w(s)^2 \log \frac{w(s)^2}{k_2^2} \\
&= -\frac{e^{S_0}}{\mathcal{N}_2^{(\textrm{even})}} \int_0^{s_k} ds \rho(s) w(s)^2 \left( \log \frac{w(s^{(2)})^2}{k_2^2} + \log \frac{w(s)^2}{w(s^{(2)})^2}\right) \nonumber \\
&=  - \log \left(\frac{w(s^{(2)})^2}{k_2^2} \right) - \frac{e^{S_0}}{\mathcal{N}_2^{(\textrm{even})}} \int_0^{s_k} ds  \left( \rho(s) w(s)^2 \log \frac{w(s)^2}{w(s^{(2)})^2} \right) \nonumber \\
&\approx - \log \left(\frac{w(s^{(2)})^2}{k_2^2} \right) - 2 \sqrt{\frac{\beta}{ \pi}} \int_0^{s^{(2)}} ds e^{-\beta\(s-s^{(2)}\)^2} \beta \( \(s^{(2)}\)^2 - s^2 \) \nonumber \\
&\approx 2 \log k_2 + 2S_0 + \frac{5 \pi^2}{ \beta} - \sqrt{\frac{ 4 \pi}{\beta}} + \mathcal{O}(1)
\ea
Our final expression for $S^{T_2(2)}$ is therefore
\be
S^{T_2(2)} = 2 \log k_2 + S_0 + \frac{2\pi^2}{\beta} -  \sqrt{\frac{4\pi}{\beta}}
\ee
confirming that there is an $\mathcal{O}(1/\sqrt{\beta})$ correction at transition. 

The fact that the refined R\'enyi-2 negativity experiences this particular correction is not surprising due to its close connection to von Neumann entropies.  It is known that von Neumann entropies receive $\mathcal{O}(1/\sqrt{\beta})$ or $\mathcal{O}(1/\sqrt{G_N})$ corrections at the Page transition~\cite{Penington:2019kki,Dong:2020iod,Marolf:2020vsi}, which can be explained using a diagonal approximation with respect to a basis of fixed-area states~\cite{Dong:2020iod,Marolf:2020vsi,Akers:2018fow}.  It was shown that the refined R\'enyi-2 negativity can be written in holography as the sum of the von Neumann entropies of $R_1$ and $R_2$ in the state $\rho_{R_1 R_2}^2$ (once properly normalized)~\cite{Dong:2021clv}.\footnote{This can be understood in terms of two cosmic branes homologous to $R_1$ and $R_2$, respectively, in the gravity dual of the even R\'enyi negativity.  These cosmic branes arise from a $\bZ_{n/2}$ quotient and therefore become tensionless in the $n\to 2$ limit, which is similar to the case of the von Neumann entropy.}  Therefore, the $\mathcal{O}(1/\sqrt{\beta})$ correction that we find in the refined R\'enyi-2 negativity can similarly be explained using the diagonal approximation.

As we show in Appendix~\ref{app:renyi}, the R\'enyi entropy $S_n$ with $n < 1$ experiences $\mathcal{O}(1/\beta)$ corrections in the model of \cite{Penington:2019kki}, as there too we are computing an entanglement measure with $s_k^{(n)} < s^{(n)}$. In other words, the R\'enyi index of both the logarithmic negativity and the R\'enyi entropy with $n < 1$ is below some ``critical'' R\'enyi index at which there exist $\mathcal{O}(1/\sqrt{G})$ corrections. 

For measures descending from odd R\'enyi negativity, we might not expect $\mathcal{O}(1/\sqrt{\beta})$ corrections, as we never have $s_k^{(n,\textrm{odd})} = s^{(n)}$. However, we may still expect some enhanced corrections for some negativity measures in this case. The partially transposed entropy $S^{T_2}$ is one such measure. As $s_k^{(1,\textrm{odd})} = 5\pi/2\beta$, the naive answer for $S^{T_2}$ is given by 
\be
S^{T_2} = \log k_2 + S_0 + \frac{4 \pi^2}{\beta}.
\label{eq:naiveST}
\ee
Our approximation gives
\ba
S^{T_2} &= - \int_{-\infty}^\infty d\lambda D(\lambda) \lambda \log \abs{\lambda} \nonumber \\
= &-e^{S_0} \int_0^{s_k} ds \rho(s)  \Biggl( \frac{k_2(k_2+1)}{2}\left(\lambda_0+\frac{w(s)}{k_2} \right) \log \left(\lambda_0+\frac{w(s)}{k_2} \right)  \nonumber \\   
&+ \frac{k_2(k_2-1)}{2}\left(\lambda_0-\frac{w(s)}{k_2} \right) \log \left(\frac{w(s)}{k_2} - \lambda_0 \right) \Biggr).
\label{eq:ST1}
\ea
There is however a subtlety here. The dominant contribution to this integral no longer comes solely from the $w(s)$ term. This can be seen by expanding \eqref{eq:ST1} using
\be
\log \left( \frac{w(s)}{k_2} \pm \lambda_0 \right) \approx \log \frac{w(s)}{k_2} \pm \frac{\lambda_0 k_2}{w(s)}.
\ee
From this we obtain
\ba
S^{T_2} &= -e^{S_0} \int_0^{s_k} ds \rho(s) \left( k_2^2 \left( \lambda_0 + \lambda_0 \log \frac{w(s)}{k_2}  \right) + k_2 \left( \frac{\lambda_0^2 k_2}{w(s)} + \frac{w(s)}{k_2} \log \frac{w(s)}{k_2} \right) \right) \nonumber \\
&\approx -e^{S_0} \int_0^{s_k} ds \rho(s) \left( k_2^2 \lambda_0 + w(s) \right) \log \frac{w(s)}{k_2}.
\ea
Treating the $\log \frac{w(s)}{k_2}$ as negligible compared to the exponential $\rho(s)$, the two terms in the integrand are of the same order, so we should keep them both. If we look at the naive transition point $s_k = 5\pi/2\beta > s^{(1)}$, we find
\ba
S^{T_2} &\approx -e^{S_0} \int_0^{s_k} \rho(s) w(s) \log \frac{w(s)}{k_2} \nonumber \\
&\approx -\log \frac{w(s_k)}{k_2} \nonumber \\
&\approx \log k_2 + S_0 + \frac{4 \pi^2}{\beta},
\ea
and we would conclude that the correction is $\mathcal{O}(1)$. However, if we were to find the largest correction to this quantity, we would look not at the naive transition, but at the point where we might find $\mathcal{O}(1/\sqrt{\beta})$ corrections, at $s_k = s^{(1)}$. At this point $\lambda_0 = 1/2k$ and we capture half of the Gaussian $\rho(s) w(s)$, so we have
\ba
S^{T_2} &\approx -\frac{1}{2} \log \frac{w(s^{(1)})}{k_2} - \sqrt{\frac{\beta}{2 \pi}} \int_0^{s^{(1)}} e^{-\beta (s - s^{(1)})^2/2} \frac{\beta}{2} \log \frac{w(s)}{k_2} \nonumber \\
&\approx  -\log \frac{w(s^{(1)})}{k_2} - \sqrt{\frac{\beta}{2 \pi}} \int_0^{s^{(1)}} e^{-\beta (s - s^{(1)})^2/2} \frac{\beta}{2} \left( \left( s^{(1)}\right)^2 - s^2\right) \nonumber \\
&\approx \log k_2 + S_0 + \frac{4\pi^2}{\beta} - \sqrt{\frac{2\pi}{\beta}}.
\ea
This looks the same as the naive answer with a $\mathcal{O}(1/\sqrt{\beta})$ correction. However, as we are working at fixed $k$, we should really be writing everything in terms of $k$ and $e^{S_0}$ using \eqref{eq:skapprox}, in which case our naive and corrected $S^{T_2}$'s are
\begin{alignat}{3}
&\textrm{Naive:} \qqu && S^{T_2} = \frac{1}{2} \left( \log k - S_0 \right) + \frac{3 \pi^2}{2 \beta} \nonumber \\
&\textrm{Corrected:} \qqu && S^{T_2} = \frac{1}{2} \left( \log k - S_0 \right) + \frac{2 \pi^2}{\beta},
\end{alignat}
and we find an $\mathcal{O}(1/\beta)$ correction.

\section{Topological model with EOW branes}\label{sec:top}
Having studied a toy model of an evaporating black hole in JT gravity, we will now consider entanglement negativity in the context of the topological model of Marolf-Maxfield~\cite{Marolf:2020xie}, including dynamical end-of-the-world (EOW) branes. This is a theory of topological two-dimensional gravity in which spacetimes are two-dimensional manifolds endowed only with orientation. In contrast to the JT model of Section \ref{sec:jt}, there are no metric or dilaton degrees of freedom and EOW brane boundaries can be generated dynamically.

The action for the topological model is given by
\be
S_\text{top} = -S_0 \chi(M) - S_\partial |\partial M|,
\ee
where $S_0$ is some arbitrary parameter, $\chi(M)$ is the Euler characteristic of the (possibly disconnected) manifold $M$, and $|\partial M|$ counts the number of boundaries. $S_\partial|\partial M|$ is a non-local term that we put in by hand to ensure reflection positivity. As shown in~\cite{Marolf:2020xie}, the simplest choice which results in reflection positivity is $S_\partial = S_0$.\footnote{Other valid choices are $S_\partial = S_0 + \log m$ for any positive integer $m$ or $S_\partial > S_0 + \log k$. Any of these choices give a discrete spectrum for the operator $\hat Z$.} The action then becomes
\be
S_\text{top} = -S_0 \td{\chi}
\ee
where $\tilde{\chi} = 2-2g$ for any manifold, with or without boundary. To make contact with black hole evaporation, we can extend the model to include EOW branes, which can take one of $k$ ``flavors". Since we are interested in studying negativity, we will allow the branes to be labeled by a set of two of flavor indices $\{i,j\}$, where $i \in \{1, \dots, k_1\}$, $j\in \{1, \dots, k_2\}$, and such that $k = k_1k_2$. This is exactly analogous to our construction in Section \ref{sec:jt} and is a slight generalization of the model in~\cite{Marolf:2020xie}.

There are three distinct types of boundaries allowed by the theory. The first are circular asymptotically AdS boundaries denoted by $Z$. These boundaries are associated with an operator $\widehat Z$ which acts on the baby universe Hilbert space $\mathcal{H}_{\text{BU}}$ and creates a $Z$ boundary. Second, there are boundary conditions which we denote by $(\psi_{i_2j_2},\psi_{i_1j_1})$ composed of an oriented interval of asymptotically AdS boundaries with endpoints labeled by flavor indices $\{i_1,j_1\}$ and $\{i_2,j_2\}$. The diagram that describes this is the same as in \eqref{eq:braneBC}. These boundaries are associated with an operator $\widehat{ (\psi_{i_2j_2},\psi_{i_1j_1})}$ on $\mathcal{H}_{\text{BU}}$. Finally, there are circular EOW brane boundaries labeled by an arbitrary flavor index $\{i,j\}$, independent of all boundary conditions. These brane boundaries can be dynamically generated as additional boundaries when performing the gravitational path integral.

Let us consider the simplest quantity one can compute with the gravitational path integral of this theory, namely the partition function associated to a single connected component of spacetime with some number of asymptotic boundaries. The gravitational path integral demands that we sum over all such manifolds with arbitrary genus, weighted by $e^{-S_0 \td\chi}$. Additionally, since the EOW branes are dynamical, there is the possibility of the gravitational path integral generating an arbitrary number of closed brane boundaries, each of which contribute a factor of $k$ and are mutually indistinguishable. The partition function for a single connected component with some fixed number of asymptotic boundaries is therefore
\be
\lambda \equiv \sum_{g=0}^\infty \sum_{m=0}^\infty e^{(2-2g)S_0} \fr{k^m}{m!} = \fr{e^{2S_0}}{1-e^{-2S_0}} e^{k}.
\ee
More generally, one can consider amplitudes
\be \label{eq:top_amp}
\bigg \< Z^m (\psi_{i_1'j_1'},\psi_{i_1j_1}) \cdots (\psi_{i_n'j_n'},\psi_{i_nj_n}) \bigg \> \equiv \big \< \text{NB} \big\vert \widehat{Z}^m \widehat{(\psi_{i_1'j_1'},\psi_{i_1j_1})} \cdots \widehat{(\psi_{i_n'j_n'},\psi_{i_nj_n})}\big\vert \text{NB} \big \>
\ee
which are computed using the gravitational path integral by summing over all (possibly disconnected) manifolds with boundary conditions specified by $m$ circular boundaries $Z$ and $n$ oriented intervals $(\psi_{i'j'},\psi_{ij})$ with endpoints labeled by the corresponding flavor indices and connected to oriented brane boundaries labeled with matching flavors. The brackets in \eqref{eq:top_amp} can be interpreted the expectation value of the corresponding operators in the no-boundary state $\big\vert \text{NB} \big\> \in \mathcal{H}_\text{BU}$. In what follows, we will assume that the no-boundary state is unit normalized, $\big \< \text{NB} \big\vert \text{NB} \big \> = 1$.\footnote{In~\cite{Marolf:2020xie}, the no-boundary state has inner product $\big \< \text{NB} \big\vert \text{NB} \big \> = e^{\lambda}$ and represents the sum over arbitrary numbers of closed universes. This normalization enters as a universal prefactor in all amplitudes we compute, so we can choose to normalize it to one.}

Let us now proceed to the calculation of negativity in this model. In analogy to Section \ref{sec:jt}, we can define the (unnormalized) density matrix
\be
\rho = \sum_{i_1,i_2=1}^{k_1}\sum_{j_1,j_2=1}^{k_2} |i_1, j_1 \rangle \langle i_2, j_2|(\psi_{i_2j_2},\psi_{i_1j_1}),
\ee
which plays the role of the state of the Hawking radiation. We are interested in studying the R\'enyi negativities $\mathcal{N}_n = \Tr\[\(\rho^{T_2}\)^n\]$ of this density matrix. The most straightforward method is to use the moment generating function of $\rho^{i_1i_2}_{j_1j_2} \equiv (\psi_{i_2j_2},\psi_{i_1j_1})$, which one can show is given by
\be\label{eq:genfun}
\bigg\<\exp \bigg(\sum_{i_1,i_2=1}^{k_1}\sum_{j_1,j_2=1}^{k_2} t^{i_1i_2}_{j_1j_2}\rho^{i_1i_2}_{j_1j_2} \bigg)\bigg\> = e^{-\lambda} \exp\bigg[\lambda \det\(\mathbb{I}-t\)^{-1}\bigg],
\ee
where $t$ can be thought of as the $k \times k$ matrix with entries $t^{i_1i_2}_{j_1j_2}$ by treating $\{i,j\}$ as a single index of size $k$ and $\mathbb{I}$ is the $k\times k$ identity matrix . This is a slight generalization of the result derived in~\cite{Marolf:2020xie}. In principle, one can compute all moments of $\rho^{i_1i_2}_{j_1j_2}$, and hence all R\'enyi negativities, by taking appropriate partial derivatives of \eqref{eq:genfun} with respect to $t^{i_1i_2}_{j_1j_2}$. 

However, there is a shortcut that we will now describe. As shown in~\cite{Marolf:2020xie}, the spectrum of the operator $\widehat Z$ takes values in $\mathbb{N}$. One can derive the distribution for $\rho^{i_1i_2}_{j_1j_2}$ in a fixed $Z = d \in \mathbb{N}$ sector by Taylor expanding \eqref{eq:genfun} in $\lambda$:
\ba
&\bigg\<\exp \bigg(\sum_{i_1,i_2=1}^{k_1}\sum_{j_1,j_2=1}^{k_2} t^{i_1i_2}_{j_1j_2}\rho^{i_1i_2}_{j_1j_2} \bigg)\bigg\> = \sum_{d=0}^\infty p_d(\lambda) \bigg\< \exp\( \sum_{i_1,i_2=1}^{k_1}\sum_{j_1,j_2=1}^{k_2} t^{i_1i_2}_{j_1j_2}\rho^{i_1i_2}_{j_1j_2} \) \bigg\>_{Z=d} \nn \\
&\qqu \qqu \implies \bigg\< \exp \(\sum_{i_1,i_2=1}^{k_1}\sum_{j_1,j_2=1}^{k_2} t^{i_1i_2}_{j_1j_2}\rho^{i_1i_2}_{j_1j_2}\) \bigg\>_{Z=d} = \det\(I-t\)^{-d}, \label{eq:top_mgf_d}
\ea
where $p_d(\lambda) = e^{-\lambda}\fr{\lambda^d}{d!}$ is a Poisson distribution. We can recognize \eqref{eq:top_mgf_d} as the moment generating function for a Wishart distribution with $d$ degrees of freedom, and coincides with the distribution of a random mixed state and the microcanonical JT model in Section \ref{sec:MC}. 
Thus, we can immediately write down the R\'enyi negativities in a fixed $Z=d$ sector
\be\label{eq:fixedZ}
\Big\<\Tr\(\rho^{T_2}_R\)^n\Big\>_{Z=d} = \sum_{g\in S_n} d^{\chi(g)} k_1^{\chi(g^{-1}X)} k_2^{\chi(g^{-1}X^{-1})},
\ee
which matches the answer for the microcanonical ensemble \eqref{eq:microcanonical} with $d$ playing the role of $e^\bfS$. The results for the negativity spectrum obtained in Sections~\ref{sec:domsaddles} and~\ref{sec:MC} therefore apply with this replacement. However, since $d$ does not correspond to the partition function on some manifold, it is difficult to interpret the result in \eqref{eq:fixedZ} geometrically.

To obtain the R\'enyi negativities in the full theory, we simply sum over $d \in \mathbb{N}$ with Poisson weight $p_d\(\lambda\)$:
\ba
\Big\<\Tr\(\rho^{T_2}_R\)^n\Big\> &= \sum_{d=0}^\infty p_d(\lambda) \,\Big\<\Tr\(\rho^{T_2}_R\)^n\Big\>_{Z=d} \nonumber \\
&= \sum_{g\in S_n} B_{\chi(g)}(\lambda) k_1^{\chi(g^{-1}X)} k_2^{\chi(g^{-1}X^{-1})} \label{eq:top_neg}
\ea
where $B_m(\lambda) = e^{-x} \sum_{k=0}^\infty \fr{\lambda^k k^m}{m!}$ are the Bell polynomials, whose asymptotic behavior is $B_m(\lambda) \sim \lambda^m$ as $\lambda \to \infty$. We therefore find
\be\label{eq:top_neg_approx}
\Big\<\Tr\(\rho^{T_2}_R\)^n\Big\> \approx \sum_{g\in S_n} \lambda^{\chi(g)} k_1^{\chi(g^{-1}X)} k_2^{\chi(g^{-1}X^{-1})}, \qqu \lambda \gg 1.
\ee
This is once again equivalent to the microcanonical ensemble in \eqref{eq:microcanonical}, with $\lambda$ now playing the role of $e^\bfS$, and therefore we can obtain concrete results for the negativity spectrum. Since $\lambda$ is the gravitational partition function of a single connected component of spacetime, we can in fact find a geometric interpretation for the terms in \eqref{eq:top_neg}.

To understand the geometric origins of the terms in \eqref{eq:top_neg}, let us first look at the case $n=2$, which gives the purity
\ba\label{eq:purity}
\Big\< \Tr\rho^2 \Big\> = \lambda^2 k + \lambda k^2 + \lambda k.
\ea
The terms in \eqref{eq:purity} correspond to the following geometries:
\begin{figure}[H]
    \centering
    \vspace{4mm}
    \includegraphics[width=0.5\textwidth]{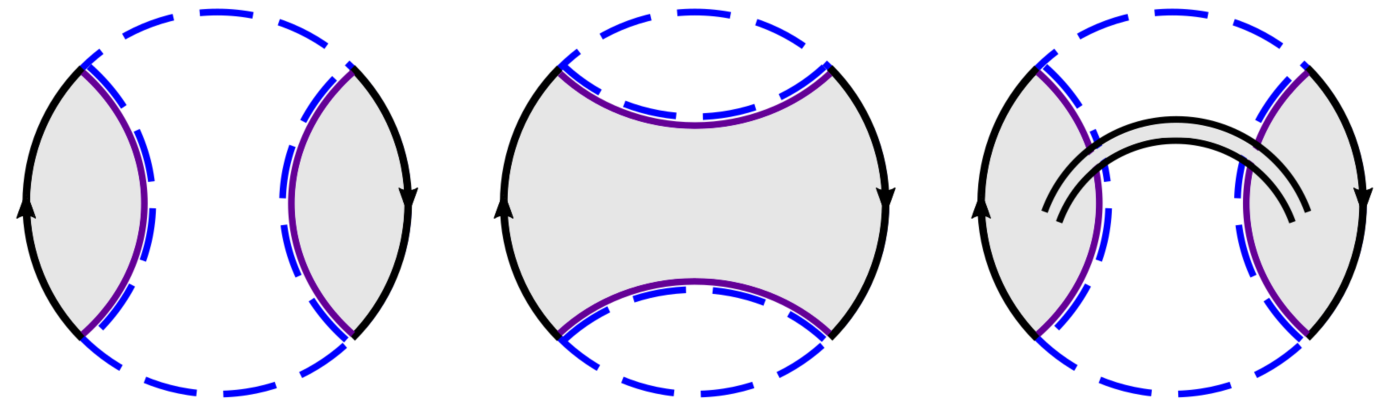}
    \label{fig:BU}
    \vspace{2mm}
\end{figure}
\noindent The first two diagrams are familiar: they are the disk and wormhole geometries, summed over genus and closed brane boundaries. The last diagram represents two disk geometries joined by an arbitrary number of wormholes; we thus call it a \textit{joining wormhole}.

More generally, the geometries which contribute at leading order in $\lambda$ in the R\'enyi negativities \eqref{eq:top_neg} are in one-to-one correspondence with elements of the permutation group. To be precise, each of these geometries is actually a disjoint union of disks, summed over genus and closed brane boundaries.\footnote{This is in contrast with the JT model where we identify only a single geometry, namely some disjoint union of disks with no handles, with each element of the permutation group. In that case, the sum over genus is highly suppressed by factors of $e^{-\bfS}$, and they are in fact as suppressed as geometries with joining wormholes. Furthermore, in the JT model closed brane boundaries can not be dynamically generated.} The subleading contributions in \eqref{eq:top_neg} can be identified with the same geometries but with arbitrary numbers of joining wormholes between connected components, and thus can not be mapped to elements of the permutation group.

It is clear that $\log \lambda$ plays the same role as $\bfS$ in the microcanonical JT model, namely it is the Bekenstein-Hawking entropy. In analogy to black hole evaporation, we should assume $\lambda \gg 1$. The joining wormholes are therefore parametrically suppressed, but disks with handles are not since they instead come with factors of $e^{-2gS_0}$ and $S_0$ is not \textit{a priori} a large parameter (in fact, it may have a small or even negative real part).\footnote{In the JT model both geometries are suppressed in the same parameter $e^{-\bfS}$.} This is the analogue of the ``planar" limit in the topological model. There are thus two distinct classes of higher genus geometries: disks with handles and joining wormholes. The higher genus disk geometries can be systematically included in a Schwinger-Dyson equation as in Section \ref{sec:resolvent}, while the joining wormholes can not.

To study the Page curve, we would like to fix the value of $\lambda$ and tune $k$. Since $\lambda \sim e^k$, this involves scaling the prefactor $\fr{e^{2S_0}}{1-e^{-2S_0}}$ with $e^{-k}$. However, this function has a minimum value for real $S_0$, which means the black hole can not evaporate completely. To decrease $\lambda$ beyond the minimum value, we need to go to complex values of $S_0$, namely $e^{2S_0} \in \fr{1}{2} + i \mathbb{R}$. This is a bit strange because it implies a complex action, but is presumably fine because $S_0$ is not a physical parameter (it is not the Bekenstein-Hawking entropy here). This is simply a quirk of the model, and can be attributed as a consequence of having a non-vanishing $S_\partial$.

\section{Discussion}\label{sec:conclusion}

In this paper, we analyzed the behavior of negativity measures in toy models of evaporating black holes in both JT gravity and a topological theory of gravity, with EOW branes.  We found four distinct phases dominated by different saddle-point geometries: the disconnected, cyclically connected, anti-cyclically connected, and pairwise connected.  The last of these geometries are new replica wormholes that break the replica symmetry spontaneously.

We also studied the negativity resolvent using a Schwinger-Dyson equation that resums the contributions of different geometries, and used it to extract the negativity spectrum and negativity measures.  This analysis is valid not only within each of the four phases, but also near phase transitions.  For the topological model or a microcanonical ensemble in JT gravity, we found a cubic equation for the resolvent which can be solved exactly.  For a canonical ensemble in JT gravity, we found a quadratic resolvent equation near the disconnected-pairwise transition, and we solved a more complicated resolvent equation approximately near the cyclic-pairwise transition.  Near this last transition, we found enhanced corrections to various negativity measures: the refined R\'enyi-2 negativity receives an $\cO(1/\sqrt{\b})$ correction, whereas the logarithmic negativity and the partially transposed entropy receive $\cO(1/\b)$ corrections.

These enhanced corrections to negativities are similar to previously found corrections to the von Neumann entropy at the Page transition~\cite{Penington:2019kki,Dong:2020iod,Marolf:2020vsi}.  For the von Neumann entropy, the enhanced corrections can be explained using a diagonal approximation with respect to a basis of fixed-area states~\cite{Dong:2020iod,Marolf:2020vsi,Akers:2018fow}.  We argued that the $\cO(1/\sqrt{\b})$ correction to the refined R\'enyi-2 negativity can be explained in the same way by noting its close connection to von Neumann entropies.  It would be interesting to understand further the $\cO(1/\b)$ corrections to the logarithmic negativity and the partially transposed entropy in a similar way.  Moreover, it would be useful to study the implications of these $\cO(1/\b)$ corrections for the partially transposed entropy more generally: it was conjectured in~\cite{Dong:2021clv} that $S^{T_2}(\r_{R_1R_2})$ is given as a sum of von Neumann entropies $(S_{R_1} +S_{R_2} +S_{R_1 R_2})/2$ in general non-fixed-area states by assuming a diagonal approximation, but this would imply an $\cO(1/\sqrt{\b})$ correction and seems to be in tension with the $\cO(1/\b)$ correction that we find here.

We focused our study on two specific toy models of evaporating black holes, but it would be interesting to generalize our analysis to other models, including the examples studied in~\cite{Almheiri:2019qdq}.

Finally, it would be very interesting to use these results on negativity to diagnose the structure of multipartite entanglement in a realistic evaporating black hole and learn more about its quantum state.  We hope that this will lead to new insights on understanding the interior of black holes and the dynamics of their evaporation.

\section*{Acknowledgments}
We would like to thank Jonah Kudler-Flam, Don Marolf, Henry Maxfield, Geoff Penington, Xiao-Liang Qi, Pratik Rath, Shreya Vardhan, and Michael Walter for useful discussions. This material is based upon work supported by the Air Force Office of Scientific Research under award number FA9550-19-1-0360.

\appendix
\section{Derivation of dominant saddles for negativity}
\label{app:dom}
In this appendix, we derive the set of saddle-point geometries that give dominant contributions to the R\'enyi negativity in various regimes of the parameter space.  This includes each of the distinct phases and near phase transitions.

Our derivation uses facts about geodesics on the permutation group, which we review first. Let $S_n$ be the symmetric group of order $n$, which is the set of permutations on $n$ elements. For any permutation $g \in S_n$, we define $\ell\(g\)$ as the minimum number of swaps from the identity $\mathbbm{1} = (1)(2)\cdots(n)$ to $g$ and $\chi(g)$ as the number of disjoint cycles in $g$, including 1-cycles. These quantities satisfy the relations
\ba
&\ell\(g\) + \chi(g) = n, \label{eq:ellchisum} \\
&\chi(g) = \chi(g^{-1}).
\ea 
As an example, the permutation\footnote{This $g$ is the permutation $12345 \to 21453$ written in cycle notation. Each digit in a given cycle is replaced by the following digit, except for the last digit which is replaced by the first.} $g = (12)(345) \in S_5$ has $\ell\(g\) = 3$ and $\chi(g) = 2$.

We can define the distance between two permutations $g$ and $h$ by
\be\label{eqdist}
d(g,h) \equiv \ell(g^{-1}h)
\ee
which satisfies the usual properties of a distance measure. In particular, given any sequence of permutations $(g_1, \cdots, g_m)$, the distance satisfies the triangle inequality
\be \label{eq:triangle}
d(g_1, g_2) + \cdots + d(g_{m-1},g_m) \geq d(g_1,g_m).
\ee
A sequence of permutations that saturates \eqref{eq:triangle} is said to be on a \textit{geodesic}. We denote a geodesic between two permutations $g$ and $h$ by $G(g, h)$.  We say that a permutation $g'$ is on $G(g, h)$, or equivalently $g'\in G(g,h)$, if the sequence $(g, g', h)$ saturates the triangle inequality \eqref{eq:triangle}.

Our goal is to identify the permutations $g$ that dominate the sum in the R\'enyi negativity~\er{eq:renyi_neg_schem}, in different regimes of the parameter space labeled by $e^{S_0}$, $k_1$, and $k_2$.  We repeat the sum in~\er{eq:renyi_neg_schem} here:
\be\label{eqsum}
\sum_{g \in S_n} \left( e^{S_0}\right)^{\chi(g)} k_1^{\chi(g^{-1} X)} k_2^{\chi(g^{-1} X^{-1})}.
\ee

For reasons that will become clear shortly, it is useful to first identify the permutations on one or more of the following geodesics: $G(\mathbbm{1},X)$, $G(\mathbbm{1},X^{-1})$, and $G(X,X^{-1})$.  Here $X = (12 \cdots n)$ is the cyclic permutation of $n$ elements, and $X^{-1} = (n \cdots 2 1)$ is the anti-cyclic permutation.

For a given permutation $g$, let us use $m$, $p$, $q$ to denote the three exponents in the sum~\er{eqsum}:
\be
m = \chi(g), \qqu
p = \chi(g^{-1} X), \qqu
q = \chi(g^{-1} X^{-1}).
\ee
They satisfy three triangle inequalities, which can be obtained from \eqref{eq:ellchisum}, \er{eqdist}, and~\er{eq:triangle}:
\ba
d(\mathbbm{1},g)+d(g,X) \geq d(\mathbbm{1},X) \qqu &\Rightarrow \qqu m+p \leq n+1,\la{eq1x}\\
d(\mathbbm{1},g)+d(g,X^{-1}) \geq d(\mathbbm{1},X^{-1}) \qqu &\Rightarrow \qqu m+q \leq n+1,\la{eq1xi}\\
d(X,g)+d(g,X^{-1}) \geq d(X,X^{-1}) \qqu &\Rightarrow \qqu p+q \leq n+f(n),\la{eqxxi}
\ea
where $f(n)$ is a useful function defined as
\be
f(n) \eq
\begin{cases}
1, & \text{$n$ odd},\\
2, & \text{$n$ even},
\end{cases}
\ee
and we have used $\chi(X)=\chi(X^{-1})=1$, $\chi(X^2)=f(n)$.

We now identify the permutations $g$ on one or more of the three geodesics.

\paragraph{Permutations on $G(\mathbbm{1},X)$:} These are known to be in one-to-one correspondence with non-crossing partitions, so we say that the corresponding geometries are planar.  We can write such a element as a product of $m$ non-crossing cycles (including 1-cycles):
\be\la{eqgci}
g = \prod_{i=1}^m c_i.
\ee
It is clear that such an element exists for every $m\in [1,n]$.  Since it saturates \er{eq1x}, we immediately find
\be\la{eqpv}
p=n-m+1.
\ee
Moreover, it is straightforward to derive
\be\la{eqqv}
q = \sum_{i=1}^m f(|c_i|) - m +1,
\ee
where $|c_i|$ is the length of the $i$-th cycle $c_i$.

\paragraph{Permutations on $G(\mathbbm{1},X^{-1})$:} These can be obtained by simply taking the inverse of the permutations on $G(\mathbbm{1},X)$, sending $c_i$ in \er{eqgci} to $c_i^{-1}$.  We say that these correspond to ``anti-planar'' geometries.

\paragraph{Permutations on $G(\mathbbm{1},X)$ and $G(\mathbbm{1},X^{-1})$:} Their cycles $c_i$ must be their own inverses, so the length of each cycle is at most 2.  Therefore, these permutations are precisely those non-crossing partitions that consist of only 1-cycles and 2-cycles.  Such permutations exist for every $m \geq \lceil \frac n2 \rceil$, with the lower bound saturated by non-crossing pairings consisting of $\lceil \frac n2 \rceil$ pairs and at most one 1-cycle.

\paragraph{Permutations on $G(\mathbbm{1},X)$ and $G(X,X^{-1})$:} As they saturate \er{eqxxi}, it is straightforward to use \er{eqpv} and \er{eqqv} to show that these permutations are precisely those non-crossing partitions with at most one odd cycle.  Here we define an odd cycle as one of odd length and an even cycle as one of even length.  For even $n$, these permutations consist of only even cycles, whereas for odd $n$, they have exactly one odd cycle.  Such permutations exist for every $m \leq \lceil \frac n2 \rceil$, with the upper bound saturated by non-crossing pairings.

\paragraph{Permutations on $G(\mathbbm{1},X^{-1})$ and $G(X,X^{-1})$:} These are obtained by taking the inverse of the permutations on $G(\mathbbm{1},X)$ and $G(X,X^{-1})$.

\paragraph{Permutations on $G(\mathbbm{1},X)$, $G(\mathbbm{1},X^{-1})$, and $G(X,X^{-1})$:} It is clear by combining the previous cases that these permutations are those non-crossing partitions that consist of only 2-cycles and at most one 1-cycle.  Therefore, they are in one-to-one correspondence with non-crossing pairings~\cite{Dong:2021clv}.  These all have $m = \lceil \frac n2 \rceil$ and $p = q = \lfloor \frac n2 \rfloor+1$.  We denote these non-crossing pairings by $\t$, and say that they correspond to pairwise geometries.  A simple example is $\tau = (12)(34)\cdots(n-1,n)$ for even $n$ and $\tau = (12)(34)\cdots(n-2,n-1)(n)$ for odd $n$.\\

\begin{figure}
    \centering
    \includegraphics[width=.5\textwidth]{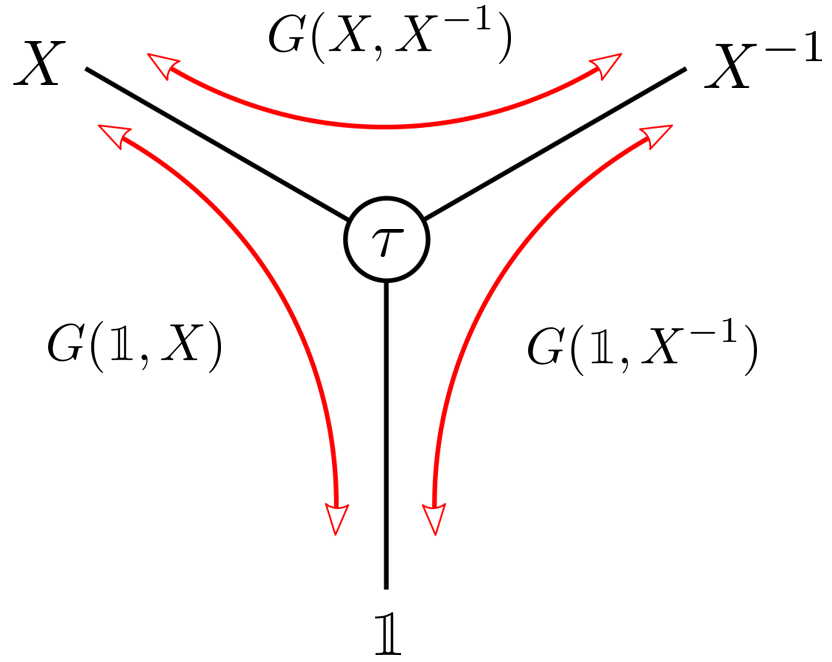}
    \caption{A schematic Cayley graph for the relevant permutations and the geodesics connecting them.}
    \label{fig:cayley}
\end{figure}

These results are illustrated schematically in Figure \ref{fig:cayley}. We now state and prove the main points of this appendix.

\begin{lemma}\la{lmplanar}
In the regime $k_1/k_2 \gg e^{-S_0}$, planar geometries dominate the sum~\er{eqsum}.  In other words, for any $g \notin G(\mathbbm{1},X)$, there exists $g' \in G(\mathbbm{1},X)$ such that $g'$ dominates over $g$.
\end{lemma}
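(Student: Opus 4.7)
My plan is to prove the lemma by exhibiting, for each non-planar $g\in S_n$, an explicit planar permutation $g'\in G(\mathbbm{1},X)$ whose contribution $r(g')\equiv(e^{S_0})^{m'}k_1^{p'}k_2^{q'}$ dominates $r(g)$ in the regime $k_1/k_2\gg e^{-S_0}$. The main tools are the three triangle inequalities \er{eq1x}, \er{eq1xi}, and \er{eqxxi} established above, combined with the parity constraint $m+p\equiv n+1\pmod 2$ which upgrades non-planarity into the strict bound $m+p\leq n-1$. I will also use the explicit cycle-structure formulas \er{eqpv} and \er{eqqv} for $p'$ and $q'$ when $g'\in G(\mathbbm{1},X)$.

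I would split the argument by the value of $m=\chi(g)$. When $m\leq\lceil n/2\rceil$, the natural candidate is the non-crossing pairing $\tau$, which has $(m_\tau,p_\tau,q_\tau)=(\lceil n/2\rceil,\lfloor n/2\rfloor+1,\lfloor n/2\rfloor+1)$. The log-ratio of contributions is
\be
\log\frac{r(\tau)}{r(g)}=(m_\tau-m)S_0+(p_\tau-p)\log k_1+(q_\tau-q)\log k_2,
\ee
and the plan is to express this as a non-negative linear combination of the four manifestly positive quantities $S_0+\log k_1-\log k_2$, $S_0$, $\log k_1$, and $\log k_2$. Such a decomposition is possible whenever $m_\tau-m\geq 0$, $p_\tau-p\geq 0$, and $q-q_\tau\leq\min(m_\tau-m,p_\tau-p)$; the last bound is automatic from $m+q\leq n+1$ and $p+q\leq n+f(n)$, together with the easy identities $m_\tau+q_\tau=n+1$ and $p_\tau+q_\tau=n+f(n)$. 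Non-planarity provides the strict gain $(m_\tau-m)+(p_\tau-p)\geq 2$, ensuring at least one coefficient in the decomposition is positive, so the log-ratio is $\gg 0$ in the regime. When $m\geq\lceil n/2\rceil+1$, I would instead use the planar $g'$ with $m'=m$ cycles made up of as many $2$-cycles as possible so that $q'=n+1-m$; this gives $m'-m=0$, $p'-p\geq 2$ from non-planarity, and $q'-q\geq 0$ from $m+q\leq n+1$, hence $r(g')/r(g)\geq k_1^2$.

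The main obstacle is the subcase within $m\leq\lceil n/2\rceil$ where $p>\lfloor n/2\rfloor+1$, so that $p_\tau-p<0$ and the decomposition above fails. In this subcase the bound $m+p\leq n-1$ forces $m$ to be small, and I would switch to the candidate $g'=X$ with $(1,n,f(n))$: the large positive factor $k_1^{n-p}$ then compensates the negative shift in $(e^{S_0})^{1-m}$, while the inequality $p+q\leq n+f(n)$ controls $q-f(n)\leq n-p$ and thus any negative $k_2$ exponent via the regime assumption. The three cases together cover every non-planar $g$ and complete the proof.
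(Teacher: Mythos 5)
Your overall strategy---controlling the exponent vector $(m,p,q)$ of a non-planar $g$ with the triangle inequalities \er{eq1x}--\er{eqxxi} and exhibiting a planar competitor with a larger contribution---is the same as the paper's, but your choice of competitors leaves a genuine gap. The regime $k_1/k_2 \gg e^{-S_0}$ only guarantees that $e^{S_0}$, $k=k_1k_2$, and $e^{S_0}k_1/k_2$ are large; it permits $k_1 = \cO(1)$ with $1 \ll k_2 \ll e^{S_0}$, so a positive coefficient on the generator $\log k_1$ alone does not make a log-ratio $\gg 0$. This bites precisely when $m_\tau - m = 0$ (i.e.\ $m=\lceil n/2\rceil$): your decomposition then forces $\alpha=\beta=0$ and the entire gain sits on $\log k_1$ (and possibly $\log k_2$). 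Concretely, for $n=4$ take $g=(132)(4)$, which has $(m,p,q)=(2,1,3)$ and is anti-planar but not planar; your candidate $\tau=(12)(34)$ has $(2,3,3)$, so the ratio is $k_1^2 k_2^0 = \cO(1)$ when $k_1=\cO(1)$---no domination. The same defect afflicts your other branches: the case $m\geq\lceil n/2\rceil+1$ yields only a factor $k_1^2$, and in the ``switch to $X$'' subcase the bound $e^{(1-m)S_0}(k_1/k_2)^{n-p}$ is not $\gg 1$ from the regime hypothesis once $m\geq 2$, since $(k_1/k_2)^{n-p}$ is only bounded below by $e^{-(n-p)S_0}$.

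The missing idea is to spend the geodesic defect of $g$ in the $e^{S_0}$ direction rather than in $k_1$. Writing $d(\mathbbm{1},g)+d(g,X)-d(\mathbbm{1},X)=2r$ with $r\geq 1$, the paper chooses a planar $g'$ with $m'=m+r$, lying additionally on $G(\mathbbm{1},X^{-1})$ when $m+r\geq\lceil n/2\rceil$ and on $G(X,X^{-1})$ otherwise; this gives $p'=p+r$ and $q'\geq q-r$, hence a gain of at least $\left(e^{S_0}k_1/k_2\right)^r$, which is exactly the combination the regime hypothesis controls. In the example above this is $g'=(12)(3)(4)$ with exponents $(3,2,2)$, beating $g$ by $e^{S_0}k_1/k_2\gg 1$, whereas your $\tau$ merely ties with it. Your argument can be salvaged only under the strictly stronger assumption $k_1\gg 1$, and even then the ``switch to $X$'' subcase needs the paper's intermediate competitors on $G(\mathbbm{1},X)\cap G(X,X^{-1})$ rather than $X$ itself.
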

\begin{proof}
As $g \notin G(\mathbbm{1},X)$, the difference
\be
d(\mathbbm{1},g)+d(g,X) - d(\mathbbm{1},X)
\ee
is positive.  However, this difference must be even, regardless of whether $g$ is an even or odd permutation.  Therefore, we denote this difference by $2r$ with a positive integer $r$, and use it to rewrite the triangle inequality~\er{eq1x} as
\be\la{eqmp}
m+p = n+1-2r.
\ee
Our goal is to choose a more dominant $g'$.  For $g'$, we define $m'$, $p'$, $q'$ similarly as
\be
m' = \chi(g'), \qqu
p' = \chi(g'^{-1} X), \qqu
q' = \chi(g'^{-1} X^{-1}).
\ee

Let us discuss $m+r \geq \lceil \frac n2 \rceil$ and $m+r < \lceil \frac n2 \rceil$ separately.  For $m+r \geq \lceil \frac n2 \rceil$, we choose $g'$ to be on $G(\mathbbm{1},X)$ and $G(\mathbbm{1},X^{-1})$, with $m'=m+r$.  As we discussed earlier, such a permutation exists --- in particular, \er{eqmp} guarantees $m + r < n$.  As $g'$ saturates \er{eq1x} and \er{eq1xi} (after primes are added), we find
\be
p'=n-m-r+1 = p+r,\qqu
q'=n-m-r+1 \geq q-r,
\ee
where the second equality for $p'$ comes from \er{eqmp} and the inequality comes from \er{eq1xi} for $g$.  We thus find that $g'$ gives a more dominant contribution to the sum~\er{eqsum} than $g$, as
\be\la{eqratio}
\fr{e^{m' S_0} k_1^{p'} k_2^{q'}}{e^{m S_0} k_1^{p} k_2^{q}} \geq \( e^{S_0}\fr{k_1}{k_2} \)^r \gg 1.
\ee

In the other case with $m+r < \lceil \frac n2 \rceil$, we choose $g'$ to be on $G(\mathbbm{1},X)$ and $G(X,X^{-1})$, with $m'=m+r$.  Again, such a permutation exists.  As $g'$ saturates \er{eq1x} and \er{eqxxi} (after primes are added), we find
\be
p'=n-m-r+1 = p+r,\qqu
q'=n+f(n)-p-r \geq q-r,
\ee
where the inequality comes from \er{eqxxi} for $g$.  We again find that \er{eqratio} holds and therefore $g'$ dominates over $g$.
\end{proof}

It is clear that Lemma \re{lmplanar} is tight in the sense that every planar geometry could give a dominant contribution to the sum~\er{eqsum} at some point in the regime $k_1/k_2 \gg e^{-S_0}$.  In particular, at the point where $k_1 = e^{S_0}$ and $k_2=1$, they give an equal contribution $e^{(n+1)S_0}$.

From Lemma \re{lmplanar}, we immediately obtain the following corollary by taking the inverse of all permutations and switching $k_1 \lra k_2$.

\begin{corollary}\la{lmanti}
In the regime $k_2/k_1 \gg e^{-S_0}$, anti-planar geometries dominate the sum~\er{eqsum}.
\end{corollary}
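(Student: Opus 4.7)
The plan is to derive Corollary~\ref{lmanti} from Lemma~\ref{lmplanar} via a $\bZ_2$ symmetry of the sum \er{eqsum} that simultaneously inverts every permutation and swaps $k_1 \leftrightarrow k_2$, mapping the planar sector $G(\mathbbm{1},X)$ bijectively onto the anti-planar sector $G(\mathbbm{1},X^{-1})$.

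First I would verify that $g \mapsto g^{-1}$ is a bijection $G(\mathbbm{1},X) \to G(\mathbbm{1},X^{-1})$. Since $\ell(\pi) = \ell(\pi^{-1})$, this map is an isometry of the Cayley graph, and it sends $X$ to $X^{-1}$, so it carries each geodesic from $\mathbbm{1}$ to $X$ to a geodesic from $\mathbbm{1}$ to $X^{-1}$.

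Next I would check that the summand in \er{eqsum} is invariant under the combined substitution $(g, k_1, k_2) \mapsto (g^{-1}, k_2, k_1)$. Setting $h = g^{-1}$, one has $\chi(g) = \chi(h)$; using $\chi(\pi) = \chi(\pi^{-1})$ together with the cyclicity $\chi(ab) = \chi(ba)$ (which holds because $ab$ and $ba$ are conjugate via $a^{-1}(ab)a = ba$), one verifies $\chi(g^{-1}X) = \chi(h^{-1}X^{-1})$ and $\chi(g^{-1}X^{-1}) = \chi(h^{-1}X)$. Therefore
\be
(e^{S_0})^{\chi(g)} k_1^{\chi(g^{-1} X)} k_2^{\chi(g^{-1} X^{-1})} = (e^{S_0})^{\chi(h)} k_1^{\chi(h^{-1} X^{-1})} k_2^{\chi(h^{-1} X)},
\ee
and the right-hand side is precisely the summand for $h$ with the roles of $k_1$ and $k_2$ exchanged. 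Reindexing by $h$ then shows that the full sum \er{eqsum} with parameters $(k_1,k_2)$ equals the same sum with parameters $(k_2,k_1)$.

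Finally I would invoke Lemma~\ref{lmplanar}. In the regime $k_2/k_1 \gg e^{-S_0}$, the swapped variables $(k_2,k_1)$ satisfy the hypothesis of Lemma~\ref{lmplanar}, so for each $h \notin G(\mathbbm{1},X)$ the lemma supplies $h' \in G(\mathbbm{1},X)$ whose summand dominates that of $h$. Pulling back through the symmetry by setting $g = h^{-1}$ and $g' = h'^{-1}$ yields, for each $g \notin G(\mathbbm{1},X^{-1})$, an element $g' \in G(\mathbbm{1},X^{-1})$ dominating $g$ in the original sum, which is exactly the statement of Corollary~\ref{lmanti}. The only point requiring care is the cyclicity identity used to match the exponents; beyond that there is no genuine obstacle, since all combinatorial content is already packaged inside Lemma~\ref{lmplanar}.
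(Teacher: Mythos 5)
Your proof is correct and follows exactly the paper's route: the paper also obtains Corollary~\re{lmanti} from Lemma~\re{lmplanar} by inverting all permutations and swapping $k_1 \leftrightarrow k_2$, and you have simply spelled out the details (the isometry $g\mapsto g^{-1}$ of the Cayley graph and the conjugation-invariance of $\chi$) that the paper leaves implicit. No gaps.
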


Combining Lemma \re{lmplanar} and Corollary \re{lmanti}, and recalling that the permutations on $G(\mathbbm{1},X)$ and $G(\mathbbm{1},X^{-1})$ are precisely those that consist of only 1-cycles and 2-cycles, we immediately obtain the following corollary (which is useful for studying the disconnected-pairwise transition in Section~\re{sec:dispair}).
 
\begin{corollary}\la{lmdispair}
In the regime $e^{-S_0} \ll k_1/k_2 \ll e^{S_0}$, the permutations consisting of only 1-cycles and 2-cycles dominate the sum~\er{eqsum}.
\end{corollary}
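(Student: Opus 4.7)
The plan is to obtain this corollary as an immediate consequence of Lemma~\re{lmplanar}, Corollary~\re{lmanti}, and the classification of geodesic intersections already carried out earlier in this appendix. First I would verify that the regime $e^{-S_0} \ll k_1/k_2 \ll e^{S_0}$ lies simultaneously in the planar regime $k_1/k_2 \gg e^{-S_0}$ (hypothesis of Lemma~\re{lmplanar}) and in the anti-planar regime $k_2/k_1 \gg e^{-S_0}$ (hypothesis of Corollary~\re{lmanti}), so that both results apply at once.

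Next I would argue that any permutation $g$ not lying on both geodesics $G(\mathbbm{1},X)$ and $G(\mathbbm{1},X^{-1})$ is strictly subdominant. If $g \notin G(\mathbbm{1},X)$, Lemma~\re{lmplanar} produces a planar $g' \in G(\mathbbm{1},X)$ whose contribution exceeds that of $g$ by a factor at least $(e^{S_0} k_1/k_2)^r$ for some positive integer $r$, which is parametrically large in the present regime. Symmetrically, if $g \notin G(\mathbbm{1},X^{-1})$ then Corollary~\re{lmanti} yields a dominating anti-planar $g'$. Since $|S_n| = n!$ is finite and independent of the large parameters, this term-by-term dominance translates into sum-level dominance: the total contribution of all permutations not lying on both geodesics is parametrically suppressed relative to the combined contribution of those that do lie on both.

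Finally, I would invoke the classification already given in this appendix under ``Permutations on $G(\mathbbm{1},X)$ and $G(\mathbbm{1},X^{-1})$,'' which states that an element of this intersection must have every cycle equal to its own inverse, forcing each cycle to have length at most $2$. Hence the dominant permutations are precisely the non-crossing partitions consisting of only 1-cycles and 2-cycles, which is the statement of the corollary.

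Because the two main ingredients are already in hand, there is no substantive obstacle; the only point that requires a bit of care is upgrading the strict term-by-term dominance supplied by Lemma~\re{lmplanar} and Corollary~\re{lmanti} to a statement about the whole sum, but this is routine given that $|S_n|$ is a parameter-independent constant.
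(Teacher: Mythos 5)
Your proposal is correct and follows essentially the same route as the paper: intersect the planar regime of Lemma~\re{lmplanar} with the anti-planar regime of Corollary~\re{lmanti}, then invoke the earlier classification that permutations on both $G(\mathbbm{1},X)$ and $G(\mathbbm{1},X^{-1})$ are exactly the non-crossing partitions with only 1-cycles and 2-cycles. The extra remark about upgrading term-by-term dominance to sum-level dominance via the finiteness of $S_n$ is a point the paper leaves implicit, but it is the same argument.
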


It is again clear that Corollary \re{lmdispair} is tight in the sense that every permutation on $G(\mathbbm{1},X)$ and $G(\mathbbm{1},X^{-1})$ could give a dominant contribution to the sum~\er{eqsum} at some point in the regime $e^{-S_0} \ll k_1/k_2 \ll e^{S_0}$.  In particular, at the point where $k_1=k_2=e^{S_0/2}$, they all give an equal contribution $e^{(n+1)S_0}$.

\begin{lemma}
In the regime $k_1 k_2 \ll e^{S_0}$, the disconnected geometry dominates the sum~\er{eqsum}.
\end{lemma}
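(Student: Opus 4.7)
The plan is to show that every permutation $g \neq \mathbbm{1}$ gives a contribution to the sum~\er{eqsum} that is parametrically smaller than the one from $g=\mathbbm{1}$, by directly bounding the ratio using the triangle inequalities~\er{eq1x} and~\er{eq1xi} that have already been established at the start of this appendix.

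First, I would compute the contribution of the disconnected permutation: since $\chi(\mathbbm{1})=n$ and $\chi(X)=\chi(X^{-1})=1$, it equals $(e^{S_0})^n k_1 k_2$. Then for an arbitrary $g\in S_n$ with $g\neq\mathbbm{1}$, set $a \equiv n - \chi(g)\geq 1$. The triangle inequalities~\er{eq1x} and~\er{eq1xi} give
\be
\chi(g^{-1}X) - 1 \leq a, \qquad \chi(g^{-1}X^{-1}) - 1 \leq a,
\ee
so the ratio of the contribution of $g$ to the contribution of $\mathbbm{1}$ is bounded by
\be
\frac{e^{S_0\chi(g)} k_1^{\chi(g^{-1}X)} k_2^{\chi(g^{-1}X^{-1})}}{e^{n S_0} k_1 k_2} \leq \(\frac{k_1 k_2}{e^{S_0}}\)^{a}.
\ee
Under the assumption $k_1 k_2 \ll e^{S_0}$, this ratio is $\ll 1$ for every $a \geq 1$, so each non-identity term is strictly subdominant.

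Finally, I would sum over all $g \neq \mathbbm{1}$ to conclude that the total subleading contribution is suppressed by at least a factor of $k_1 k_2 / e^{S_0}$ relative to the disconnected one (up to a combinatorial factor of order $n!$, which is $\cO(1)$ since $n$ is fixed when we analytically continue the R\'enyi index). I do not anticipate any genuine obstacle: the argument is an immediate consequence of the two triangle inequalities already derived, which jointly pin down $\chi(g^{-1}X)$ and $\chi(g^{-1}X^{-1})$ tightly enough that every unit of ``deficit'' $a$ in $\chi(g)$ costs exactly one power of the small parameter $k_1 k_2 / e^{S_0}$.
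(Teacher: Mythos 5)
Your proposal is correct and follows essentially the same argument as the paper: both use the triangle inequalities \eqref{eq1x} and \eqref{eq1xi} to bound $\chi(g^{-1}X), \chi(g^{-1}X^{-1}) \leq n - \chi(g) + 1$ and conclude that each non-identity term is suppressed by $(k_1 k_2/e^{S_0})^{n-\chi(g)}$. The only addition is your closing remark about summing the $n!$ subleading terms, which the paper leaves implicit.
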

\begin{proof}
The disconnected geometry is represented by the identity $\mathbbm{1}$ and contributes $e^{nS_0}k_1k_2$.  For any other permutation $g$, we have $m \leq n-1$.  From \er{eq1x} and \er{eq1xi}, we obtain $p, q \leq n-m+1$.  We thus find that $\mathbbm{1}$ gives a more dominant contribution to the sum~\er{eqsum} than $g$, as
\be
\fr{e^{n S_0} k_1 k_2}{e^{m S_0} k_1^{p} k_2^{q}} \geq \(\fr{e^{S_0}}{k_1 k_2}\)^{n-m} \gg 1.
\ee
\end{proof}

\begin{lemma}\la{lmcyclic}
In the regime $k_1/k_2 \gg e^{S_0}$, the cyclic geometry dominates the sum~\er{eqsum}.
\end{lemma}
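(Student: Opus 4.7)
My plan is to combine Lemma \ref{lmplanar} with a direct comparison argument on the geodesic $G(\mathbbm{1},X)$.  Since the regime $k_1/k_2 \gg e^{S_0}$ is strictly stronger than $k_1/k_2 \gg e^{-S_0}$, Lemma \ref{lmplanar} already tells us that any dominant permutation must be planar, i.e., lie in $G(\mathbbm{1},X)$.  It therefore suffices to show that among planar $g$, the cyclic permutation $X$ itself beats every other planar element when $k_1/k_2 \gg e^{S_0}$.

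Pick an arbitrary $g \in G(\mathbbm{1},X)$ with $g \neq X$, and write $m = \chi(g)$, $p = \chi(g^{-1}X)$, $q = \chi(g^{-1}X^{-1})$.  From the discussion preceding Lemma \ref{lmplanar}, we have $p = n-m+1$ and $q = \sum_{i=1}^m f(|c_i|) - m + 1$, where the $c_i$ are the disjoint cycles of $g$.  Since $g \neq X$, we have $m \geq 2$.  The ratio of the cyclic contribution to that of $g$ is
\be
\frac{e^{S_0}k_1^n k_2^{f(n)}}{e^{mS_0}k_1^p k_2^q}
= \left(\frac{k_1}{e^{S_0}}\right)^{\! m-1} k_2^{\,f(n)-q},
\ee
and the goal is to prove this is $\gg 1$ under $k_1/k_2 \gg e^{S_0}$.

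The main technical step is to bound $q$ from above.  I would show $q \leq m - 1 + f(n)$ by analyzing the parity of the cycles.  Since each $f(|c_i|) \leq 2$, we immediately get $q \leq m+1$, which handles the even $n$ case.  For odd $n$, I claim that at least one cycle $c_i$ must have odd length (because the cycle lengths sum to $n$), so $\sum_i f(|c_i|) \leq 2m-1$ and hence $q \leq m$, which is exactly $m-1+f(n)$ for odd $n$.  This is the only mildly delicate observation in the argument.

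Given this bound, since $k_1 / e^{S_0} \gg k_2$ by hypothesis and $m - 1 \geq 1$, we estimate
\be
\left(\frac{k_1}{e^{S_0}}\right)^{\! m-1} k_2^{\,f(n)-q}
\gg k_2^{\,m-1+f(n)-q} \geq k_2^{0} = 1,
\ee
which completes the comparison and shows the cyclic geometry dominates.  I expect no real obstacle beyond carefully tracking the parity observation for odd $n$; the rest is bookkeeping with the geodesic identities already established earlier in this appendix.
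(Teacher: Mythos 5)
Your argument is correct, but it takes a genuinely different route from the paper's. The paper proves the lemma in one step, comparing $X$ directly against an \emph{arbitrary} $g\neq X$ (planar or not): since $p\leq n-1$, the two triangle inequalities \er{eq1x} and \er{eqxxi} give $m\leq n-p+1$ and $q\leq n+f(n)-p$, whence the ratio of contributions is bounded below by $\left(k_1 e^{-S_0}/k_2\right)^{n-p}\gg 1$. No reduction to the planar sector is needed. Your two-step structure (first invoke Lemma~\re{lmplanar} to discard non-planar $g$, then compare $X$ against planar $g$ using the geodesic formulas $p=n-m+1$ and $q=\sum_i f(|c_i|)-m+1$) is valid — the reduction is legitimate because domination is transitive and $k_1/k_2\gg e^{S_0}$ certainly implies $k_1/k_2\gg e^{-S_0}$, and your identification of $m\geq 2$ for planar $g\neq X$ is right. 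One observation worth making: the bound $q\leq m-1+f(n)$ that you flag as the "delicate" step is actually an immediate consequence of \er{eqxxi} once you substitute $p=n-m+1$, so the cycle-parity analysis, while correct, re-derives a special case of a triangle inequality already available. What your approach buys is a concrete picture of \emph{which} cycle structures come closest to competing with $X$ (those with few cycles, all of controlled parity); what it costs is an extra dependency on Lemma~\re{lmplanar} and more bookkeeping than the paper's two-line direct comparison.
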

\begin{proof}
The cyclic geometry is represented by $X$ and contributes $e^{S_0} k_1^n k_2^{f(n)}$.  For any other permutation $g$, we have $p \leq n-1$.  From \er{eq1x} and \er{eqxxi}, we obtain $m \leq n-p+1$ and $q \leq n+f(n)-p$.  We thus find that $X$ gives a more dominant contribution to the sum~\er{eqsum} than $g$, as
\be
\fr{e^{S_0} k_1^n k_2^{f(n)}}{e^{m S_0} k_1^{p} k_2^{q}} \geq \(\fr{k_1}{k_2} e^{-S_0}\)^{n-p} \gg 1.
\ee
\end{proof}

From Lemma \re{lmcyclic}, we immediately obtain the following corollary by taking the inverse of all permutations and switching $k_1 \lra k_2$.

\begin{corollary}
In the regime $k_2/k_1 \gg e^{S_0}$, the anti-cyclic geometry dominates the sum~\er{eqsum}.
\end{corollary}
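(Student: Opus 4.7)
The plan is to reduce this corollary directly to Lemma~\re{lmcyclic} via a symmetry of the sum~\er{eqsum}.  The key observation is that reindexing the sum by $g \mapsto g^{-1}$, which is a bijection on $S_n$, produces the same sum but with $k_1$ and $k_2$ interchanged.  To establish this I would use two elementary facts about the cycle-count function $\chi$: it is a class function (so $\chi(AB)=\chi(BA)$ for all $A,B \in S_n$) and inversion invariant (so $\chi(h)=\chi(h^{-1})$).  Under $g \to g^{-1}$, the factor $e^{S_0 \chi(g)}$ is unchanged, while $\chi(gX) = \chi(Xg) = \chi\((Xg)^{-1}\) = \chi(g^{-1}X^{-1})$ and likewise $\chi(gX^{-1}) = \chi(g^{-1}X)$.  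Substituting these identities into the reindexed sum shows that it equals~\er{eqsum} with $k_1 \lra k_2$.

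With the symmetry in hand the corollary is immediate.  Lemma~\re{lmcyclic} says that $g=X$ dominates~\er{eqsum} in the regime $k_1/k_2 \gg e^{S_0}$.  Applying the lemma to the $k_1 \lra k_2$ swapped sum tells us that $g=X$ dominates that swapped sum in the regime $k_2/k_1 \gg e^{S_0}$, which translates back into the original sum being dominated by the image of $X$ under the inversion bijection, namely $g=X^{-1}$.  This is precisely the anti-cyclic permutation.  As a sanity check, the contribution $e^{S_0} k_1^{f(n)} k_2^n$ of $X^{-1}$ to~\er{eqsum} is exactly what one obtains from the cyclic contribution $e^{S_0} k_1^n k_2^{f(n)}$ by the swap $k_1 \lra k_2$.

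There is essentially no hard step in this argument; the only place where care is required is in verifying the two exponent identities $\chi(gX) = \chi(g^{-1}X^{-1})$ and $\chi(gX^{-1}) = \chi(g^{-1}X)$ using the conjugation and inversion invariance of $\chi$.  All of the real difficulty has already been absorbed into Lemma~\re{lmcyclic}, whose proof exploits the triangle inequalities on the permutation group via the distance $d(g,h)=\ell(g^{-1}h)$.  No new bound on the sum itself is needed here.
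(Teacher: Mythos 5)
Your proof is correct and is essentially the paper's own argument: the paper likewise deduces the corollary from Lemma~\re{lmcyclic} "by taking the inverse of all permutations and switching $k_1 \lra k_2$," and your verification of the identities $\chi(gX)=\chi(g^{-1}X^{-1})$ and $\chi(gX^{-1})=\chi(g^{-1}X)$ simply makes that symmetry explicit.
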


We now show that pairwise geometries dominate a fourth phase.  We first derive the following lemma as a useful intermediate step.

\begin{lemma}\la{lmpost}
In the regime $k_1 k_2 \gg e^{S_0}$, the permutations on $G(X,X^{-1})$ dominate the sum~\er{eqsum}.  In other words, for any $g \notin G(X,X^{-1})$, there exists $g' \in G(X,X^{-1})$ such that $g'$ dominates over $g$.
\end{lemma}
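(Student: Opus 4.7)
The plan is to mimic the proof of Lemma~\re{lmplanar}. Given $g \notin G(X, X^{-1})$, the difference $d(X,g) + d(g,X^{-1}) - d(X,X^{-1}) = n + f(n) - p - q$ is a positive even integer (positive from the strict failure of~\er{eqxxi} to saturate, and even by the $\bZ/2$ parity grading on $S_n$), which I denote by $2s$. The goal is to exhibit some $g' \in G(X,X^{-1})$ whose contribution exceeds that of $g$ by a factor that grows as $(k_1 k_2/e^{S_0})^s \gg 1$.

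The first step is to reduce to the planar case. The sum~\er{eqsum} is invariant under the simultaneous exchange $k_1 \leftrightarrow k_2$, $g \leftrightarrow g^{-1}$, and $G(X,X^{-1})$ is closed under inversion, so without loss of generality $k_1 \geq k_2$. Since $k_2 \geq 1$, this gives $k_1/k_2 \geq 1 \gg e^{-S_0}$, placing us in the planar regime where Lemma~\re{lmplanar} applies. If $g \notin G(\mathbbm{1},X)$, that lemma yields a dominating $g'' \in G(\mathbbm{1},X)$; if $g'' \in G(X,X^{-1})$ we are done, otherwise I replace $g$ by $g''$ and continue. This reduces the problem to $g \in G(\mathbbm{1},X) \setminus G(X,X^{-1})$, a non-crossing partition with $a$ even-length cycles and $b$ odd-length cycles, satisfying $m = a+b$, $p = n - m + 1$, and $q = a + 1$. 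The condition $g \notin G(X,X^{-1})$ then translates into $b > 2 - f(n)$, giving $2s = b - (2 - f(n))$.

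Next I would choose $g' \in G(\mathbbm{1},X) \cap G(X,X^{-1})$ with exactly $m' = m - s$ cycles. By the classification recalled earlier in this appendix, such permutations exist for every $m' \in [1, \lceil n/2 \rceil]$, and the geodesic formulas then force $p' = n - m' + 1 = p + s$ and $q' = m' - 1 + f(n) = q + s$. The only real bookkeeping is to check that $m - s$ actually lies in $[1, \lceil n/2 \rceil]$: the lower bound reduces to $2a + b \geq f(n)$, which is immediate from $b \geq 4 - f(n)$ (since $b \equiv n \pmod 2$ and $b > 2 - f(n)$); the upper bound reduces to $2a + b \leq n$, which follows from the cycle-length sum $\sum_i |c_i| = n$ combined with $|c_i| \geq 2$ for every even cycle.

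With $(m', p', q') = (m - s,\, p + s,\, q + s)$ in hand, the ratio of contributions collapses neatly to
\be
\frac{e^{m'S_0} k_1^{p'} k_2^{q'}}{e^{m S_0} k_1^p k_2^q} = \left(\frac{k_1 k_2}{e^{S_0}}\right)^s \gg 1,
\ee
showing that $g'$ dominates $g$. The key conceptual point, and the step most prone to subtlety, is arranging the construction so that the gains $+s$ in $p$ and $+s$ in $q$ are exactly balanced; this is what makes the final estimate depend only on the product $k_1 k_2$, matching precisely the characterization of the regime in which the lemma is stated.
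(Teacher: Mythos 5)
Your proof is correct, but it takes a genuinely different route from the paper's. The paper argues in a single step: since $p+q \leq n+f(n)-2 = 2\lfloor n/2\rfloor$ by \er{eqpq}, one of $p,q$ is at most $\lfloor n/2\rfloor$ (say $p$), and it chooses $g'$ on $G(\mathbbm{1},X)\cap G(X,X^{-1})$ with $p'=p+1$, using the triangle inequalities \er{eq1x} and \er{eqpq} for the \emph{original} $g$ to bound $m'\geq m-1$ and $q'\geq q+1$, so that the gain is a single factor of $k_1k_2/e^{S_0}$; it makes no appeal to Lemma~\re{lmplanar}. You instead first use the inversion symmetry $k_1\leftrightarrow k_2$, $g\leftrightarrow g^{-1}$ to assume $k_1\geq k_2$, invoke Lemma~\re{lmplanar} to reduce to planar $g$, and then parametrize planar permutations by their even/odd cycle counts $(a,b)$ to construct $g'$ with $(m',p',q')=(m-s,\,p+s,\,q+s)$ exactly. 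Your bookkeeping checks out: $q=a+1$ follows from \er{eqqv}, $2s=b-(2-f(n))$ is the correct deficit, and the bounds $f(n)\leq 2a+b\leq n$ do guarantee $m-s\in[1,\lceil n/2\rceil]$, which is what the classification of $G(\mathbbm{1},X)\cap G(X,X^{-1})$ requires. What your route buys is a fully explicit existence argument for the target permutation --- the one genuinely delicate point in either proof, since the intersection $G(\mathbbm{1},X)\cap G(X,X^{-1})$ only contains permutations with $m\leq\lceil n/2\rceil$ and one must check the constructed $g'$ actually lands there --- together with the sharper domination factor $(k_1k_2/e^{S_0})^s$ rather than a single power. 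What it costs is the extra reliance on Lemma~\re{lmplanar} and the preliminary symmetry reduction, neither of which the paper's shorter, self-contained argument uses.
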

\begin{proof}
As $g \notin G(X,X^{-1})$, the triangle inequality~\er{eqxxi} must fail to saturate by a positive but even integer, which is at least 2:
\be\la{eqpq}
p+q \leq n+f(n)-2 = 2 \lt\lfloor \frac n2 \rt\rfloor.
\ee
Therefore, one of $p$, $q$ must be no greater than $\lfloor \frac n2 \rfloor$.  Without loss of generality, we consider the case of $p \leq \lfloor \frac n2 \rfloor$.  We then choose $g'$ to be on $G(\mathbbm{1},X)$ and $G(X,X^{-1})$, with $p'=p+1$.  As $g'$ saturates \er{eq1x} and \er{eqxxi} (after primes are added), we find
\be
m'=n-p \geq m-1,\qqu
q'=n+f(n)-p-1 \geq q+1,
\ee
where the two inequalities comes from \er{eq1x} and \er{eqpq}, respectively.  As we discussed earlier, such a permutation $g'$ exists, as $m' = n-p \geq \lceil \frac n2 \rceil$.  From this, we find that $g'$ gives a more dominant contribution to the sum~\er{eqsum} than $g$, as
\be
\fr{e^{m' S_0} k_1^{p'} k_2^{q'}}{e^{m S_0} k_1^{p} k_2^{q}} \geq \fr{k_1 k_2}{e^{S_0}} \gg 1.
\ee
\end{proof}

Combining Lemmas~\re{lmplanar}, \re{lmpost} and Corollary~\re{lmanti}, and recalling that the permutations on all three geodesics $G(\mathbbm{1},X)$, $G(\mathbbm{1},X^{-1})$, and $G(X,X^{-1})$ are precisely non-crossing pairings that lead to pairwise geometries, we immediately obtain the following corollary.

\begin{corollary}\la{lmpair}
In the regime satisfying both $k_1 k_2 \gg e^{S_0}$ and $e^{-S_0} \ll k_1/k_2 \ll e^{S_0}$, the pairwise geometries dominate the sum~\er{eqsum}.
\end{corollary}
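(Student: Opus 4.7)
The plan is to deduce Corollary~\re{lmpair} directly by combining the three preceding results. First I would verify that the hypotheses of Lemma~\re{lmplanar}, Corollary~\re{lmanti}, and Lemma~\re{lmpost} are simultaneously satisfied in the regime at hand. Indeed, $e^{-S_0} \ll k_1/k_2$ gives the hypothesis of Lemma~\re{lmplanar}, $k_1/k_2 \ll e^{S_0}$ (equivalently $k_2/k_1 \gg e^{-S_0}$) gives the hypothesis of Corollary~\re{lmanti}, and $k_1 k_2 \gg e^{S_0}$ gives the hypothesis of Lemma~\re{lmpost}. So each of the three dominance statements is available in this regime.

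Next, I would argue that any $g$ which is not a non-crossing pairing can be strictly improved. Let $g \in S_n$ be a permutation that does not correspond to a pairwise geometry, i.e., $g$ fails to lie on at least one of the three geodesics $G(\mathbbm{1},X)$, $G(\mathbbm{1},X^{-1})$, $G(X,X^{-1})$. If $g \notin G(\mathbbm{1},X)$, Lemma~\re{lmplanar} produces a strictly more dominant $g' \in G(\mathbbm{1},X)$; apply analogous replacements using Corollary~\re{lmanti} and Lemma~\re{lmpost} whenever the current permutation fails to lie on the corresponding geodesic. Each step strictly increases the summand $e^{\chi(g)S_0} k_1^{\chi(g^{-1}X)} k_2^{\chi(g^{-1}X^{-1})}$ by a factor that is parametrically large in the regime assumed, so the process must terminate after finitely many steps at a permutation $\tau$ lying on all three geodesics simultaneously.

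Then I would invoke the characterization already established in the bulleted classification of geodesics: the permutations lying on $G(\mathbbm{1},X) \cap G(\mathbbm{1},X^{-1}) \cap G(X,X^{-1})$ are precisely the non-crossing pairings, which correspond to pairwise geometries. Hence every $g$ is dominated by some pairwise $\tau$, which establishes the corollary.

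The only subtle point, and the main obstacle I would want to double-check, is that the iterative improvement procedure really terminates at a permutation on all three geodesics rather than cycling between different ``improvements.'' This is handled by the strict monotonicity of the summand under each replacement together with finiteness of $S_n$: since the sequence of strictly increasing summand values cannot repeat, the process must terminate, and termination can only occur at a permutation satisfying all three dominance tests simultaneously, i.e., at a non-crossing pairing. Apart from this observation, the proof is essentially a direct assembly of the three preceding results.
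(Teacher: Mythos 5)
Your proof is correct and follows essentially the same route as the paper, which likewise obtains the corollary by combining Lemma~\re{lmplanar}, Corollary~\re{lmanti}, and Lemma~\re{lmpost} with the observation that the permutations lying on all three geodesics are precisely the non-crossing pairings. The termination argument you add (strict monotonicity of the summand under each replacement plus finiteness of $S_n$, so any maximizing permutation must pass all three dominance tests) is a valid way of making explicit the step the paper treats as immediate.
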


It is clear that Corollary~\re{lmpair} is tight in the sense that all pairwise geometries give an equal, dominant contribution $( e^{S_0} )^{\lceil \frac n2 \rceil} (k_1 k_2)^{\lfloor \frac n2 \rfloor + 1}$ to the sum~\er{eqsum}, as they all have the same $m = \lceil \frac n2 \rceil$ and $p = q = \lfloor \frac n2 \rfloor+1$.

\section{Details of the cyclic-pairwise transition in the canonical ensemble}
\label{app:trans}

In Section \ref{sec:trans}, we used some of the techniques developed in \cite{Penington:2019kki} to derive an approximation for the eigenvalue spectrum of the partially transposed density matrix near transition:
\be
D(\lambda) = e^{S_0} \int_0^{s_k} ds \rho(s) \left[ \frac{k_2(k_2+1)}{2} \delta\left(\lambda - \lambda_0 - \frac{w(s)}{k_2}\right) + \frac{k_2(k_2-1)}{2} \delta \left(\lambda - \lambda_0 + \frac{w(s)}{k_2} \right) \right].
\ee
This approximation was derived under a set of assumptions which we repeat here:
\begin{enumerate}
    \item $w(s_t)R \ll k k_2$
    \item $k_2^2 e^{S_0} \int_0^{s_t} ds \rho(s) \frac{w(s)R(k + w(s)R)}{k^2k_2^2 - w(s)^2R^2} \ll k$
    \item $s_t = s_k - \kappa$, where $\kappa$ is $\mathcal{O}(1)$ but large
\end{enumerate}
Our analysis in this section will be based on checking the consistency of the iterative procedure we applied to the resolvent equation \eqref{eq:mideq}, namely the zeroth order approximation
\be
R_0 = \frac{k}{\lambda - \lambda_0}.
\label{eq:R0}
\ee
We start with Assumption (2). We want to rigorously show the following inequality on the second term in \eqref{eq:mideq}:
\be
\abs{ k_2^2 e^{S_0} \int_0^{s_t} \rho(s) \frac{w(s)R(k + w(s)R)}{k^2 k_2^2 - w(s)^2 R^2}} \equiv  \abs{ e^{S_0} \int_0^{s_t}  \rho(s) f(s)} \ll k.
\ee
This function has a pole located at $s = s_*$, which is captured by the integral under the assumption $\abs{w(s_t)R} \ll kk_2$, as $w(s)$ is a monotonically decreasing function of $s$. We can therefore rewrite the integral with an $i \epsilon$ prescription as
\be
e^{S_0}\int ds \rho(s)f(s) = PV \left( e^{S_0}\int ds \rho(s)f(s) \right) \pm i \pi k_2^2 e^{S_0} \int ds \rho(s) \frac{w(s)R(k+w(s)R)}{\partial_s\left( k^2 k_2^2 - w(s)^2 R^2 \right)} \delta( s - s_*),
\label{eq:llk}
\ee
where PV denotes the Cauchy principal value. We choose the sign of $i \epsilon$  arbitrarily, as we are only looking to bound the absolute value of this integral. 

Let us treat the first term. The principle value is dominated by the $(s-s_*)^0$ term in the Laurent expansion of $\rho(s)f(s)$. We can perform a Laurent expansion around $s = s_*$ using the semiclassical approximation
\be
w(s) \approx e^{-\beta s^2/2 - S_0 - 2\pi^2/\beta}.
\ee
We find
\be
\rho(s)f(s) = \rho(s_*)\left(\frac{k_2(k_2+1)}{2(\beta s_*)(s-s_*)} - \frac{k_2(1+k_2(1+2\beta s_*^2))}{4 \beta s_*^2} + \mathcal{O}(s-s_*)\right).
\ee
Ignoring scaling in $\beta$, the $\mathcal{O}(s-s_*)^0$ term goes like $k_2^2 \sim k/e^{S_0} \ll k$, so we can safely ignore this term.

Let us now look at the second term in \eqref{eq:llk}. We have
\ba
\left. k_2^2 e^{S_0} \rho(s) \frac{w(s)R(k+w(s)R)}{\partial_s\left( k^2 k_2^2 - w(s)^2 R^2 \right)} \right\vert_{s = s_*} &\approx \left. k_2^2 e^{S_0} \rho(s) \frac{(w(s)R)^2}{(2\beta s)w(s)^2R^2} \right\vert_{s = s_*} \nonumber \\
&\approx   \frac{k_2^2 e^{S_0} \rho(s_*)}{2 \beta s_*},
\ea
where we have used $\abs{w(s_*)R} = kk_2 \gg k$ to simplify the numerator. We can rewrite this in terms of $s_k$ using \eqref{eq:skdef} and the asymptotic form of $\rho(s)$:
\ba
\frac{k_2^2 e^{S_0} \rho(s_*)}{2 \beta s_*} &\approx \frac{k_2^2}{2 \beta}e^{2 \pi s_*} \nonumber \\
&\approx \frac{k_2^2}{2 \beta s_k}e^{S_0} \rho(s_k)e^{2 \pi (s_*-s_k)} \nonumber \\
&= \frac{k}{2 \beta s_k}e^{2 \pi (s_*-s_k)} \ll k.
\ea
As stated previously, $s_k \sim \mathcal{O}(1/\beta)$, so for this to be much smaller than $k$ we require $s_k - s_*$ being at least $\mathcal{O}(1)$ but large, and by proxy $\kappa \equiv s_k - s_t$ being at least $\mathcal{O}(1)$ but large, as stated in assumption (3).

What is stopping $\kappa$ from being much larger, say $\mathcal{O}(1/\beta)$? Now we check the validity of assumption (1), that is $\abs{w(s_t)R} \ll kk_2 $. Under our approximation \eqref{eq:R0}, this assumption translates into the condition
\be
\abs{\lambda - \lambda_0} \gg \frac{w(s_t)}{k_2}.
\ee
However, at the boundary of our spectrum located at $s = s_t$, we should also have $\abs{\lambda-\lambda_0} = w(s_t)/k_2$. The way to reconcile these two assumptions is to state that our approximation only holds for some range of $s$ between $0$ and some control parameter $s_c$ where 
\be
\abs{\lambda - \lambda_0} = \frac{w(s_c)}{k_2}.
\ee
What is the value of $s_c$? We have
\be
w(s_c) \gg w(s_t) \Rightarrow e^{-\beta(s_t^2-s_c^2)/2} \ll 1 
\label{eq:sc1}.
\ee
We define $\kappa' \equiv s_t - s_c$. Now \eqref{eq:sc1} becomes
\be
2 s_t \kappa' + \kappa'^2 \gg 1/\beta.
\ee
Under our previous assumption that $\kappa$ is $\mathcal{O}(1)$ in $\beta$ such that $s_t \sim 1/\beta$, $\kappa'$ too must be $\mathcal{O}(1)$ but large. This answers our previous question about the size of $\kappa$. There is something of an inverse relationship between $s_t$ and $\kappa'$: our goal is to design an approximation in which $s_c$ is as close as possible to $s_t$, as well as one in which $s_t$ is as close as possible to $s_k$. Under the assumption that $s_t$ is as close as possible to $s_k$, that is $\kappa$ is $\mathcal{O}(1)$ but large, we also have $\kappa'$ $\mathcal{O}(1)$ but large. If $s_t$ was far from $s_k$ so that, say, $s_t \sim 1/\sqrt{\beta}$, we would also require $\kappa' \sim \mathcal{O}(1)/\sqrt{\beta}$ with a large $\mathcal{O}(1)$ constant, thus missing a large part of the spectrum in our approximation.

Our conclusion is that there exists a region of size $\mathcal{O}(1)$ in $s$ where assumption (1) does not hold. As the spectrum is over a region of size $\frac{w(0) - w(s_k)}{k_2}$, the size of a region $\mathcal{O}(1)$ in $s$ is exponentially suppressed in $1/\beta$, and we conclude that very few eigenvalues are in the uncontrolled region.

\section{R\'enyi entropies near the Page transition}\label{app:renyi}

We draw an analogy between the $\mathcal{O}(1/\beta)$ corrections to the logarithmic negativity and the partially transposed entropy and the $\mathcal{O}(1/\beta)$ corrections to the R\'enyi entropy $S_n$ with $n < 1$. Here we show this result explicitly in the model of \cite{Penington:2019kki}. We recall many of their results, which can equivalently be obtained from ours by sending $k_1$ to $k$ and $k_2$ to $1$.

We consider the model of Section \ref{sec:jt}, but without partitioning the radiation system. The approximation for the density of states \eqref{eq:NegDApprox} is now
\be
D(\lambda) = e^{S_0} \int_0^{s_k} ds \rho(s) \delta( \lambda - \lambda_0 - w(s)),
\ee
where $w(s)$ and $\lambda_0$ are defined as in the main text. Here $s_k$ is defined as
\be
k = e^{S_0}\int_0^{s_k} ds \rho(s) \Rightarrow s_k \approx \frac{1}{2 \pi} \left( \log k - S_0 \right).
\ee
The transition at the Page time can be thought of as the transition from the fully disconnected phase to the cyclic phase along the $x$ axis of our phase diagram (Figure \ref{fig:phase1}). Using our results from Table \ref{table} and the semiclassical approximation $\mu \gg \frac{1}{\beta} \gg 1$, we have
\ba
\Tr \rho_R^n &= \left(\frac{Z_1}{k}\right)^{n-1} = \frac{Z_n}{Z_1^n} \nonumber \\
\Rightarrow S_n &= S_0 + \left(1 + \frac{1}{n}\right) \frac{2 \pi^2}{\beta}
\label{eq:Snappendix}.
\ea
Again we can solve for $\log k$ at transition in the semiclassical regime to obtain
\be
\log k =  \log \left( \frac{Z_1^n}{Z_n} \right)^{\frac{1}{n-1}} = S_0 + \left( 1 + \frac{1}{n} \right) \frac{2 \pi^2}{\beta}.
\label{eq:logkRenyi}
\ee
From this we find $s_k^{(n)}$ at transition to be
\be
s_k^{(n)} = \frac{\pi}{\beta} \left( 1 + \frac{1}{n} \right).
\ee
As $s^{(n)} = \frac{2\pi}{n \beta}$, for $n < 1$ we have $s_k^{(n)} < s^{(n)}$. Part of our derivation relied on $s_k$ scaling like $1/\beta$, which remains true for $n$ of $\mathcal{O}(1)$.

The R\'enyi entropy is given by
\ba
S_n &= \frac{1}{1-n} \log \int_{-\infty}^{\infty} d\lambda D(\lambda) \lambda^n \nonumber \\
&= \frac{1}{1-n} \log \left( e^{S_0} \int_0^{s_k} ds \rho(s) \left( \lambda_0 + w(s) \right)^n \right).
\ea
As $s_k > s^{(1)}$, $\lambda_0$ will be exponentially suppressed in $1/\beta$, so this integral is dominated by the $w(s)$ term, and as $s_k^{(n)} < s^{(n)}$ we approximate
\ba
S_n &\approx \frac{1}{1-n} \log \left( e^{S_0} \int_0^{s_k} ds \rho(s) w(s)^n \right)
\nonumber \\
&\approx \frac{1}{1-n} \log \left( e^{S_0} \rho(s_k) w(s_k)^n \right) \nonumber \\
&\approx \fr{1}{1-n}\( \log k - n S_0 - \fr{n\beta s_k^2}{2}-\fr{2\pi^2 n}{\beta}\).
\ea
Using our expressions \eqref{eq:logkRenyi} and $s_k^{(n)}$, we find
\be
S_n = S_0 + \left( \frac{3 + 5n}{2n}\right) \frac{\pi^2}{\beta}.
\ee
Comparing this to our previous answer \eqref{eq:Snappendix}, we find a correction $\Delta S_n$ at transition of the form
\be
\Delta S_n = \frac{\pi^2}{2\beta} \left( 1 - \frac{1}{n}\right).
\ee
We conclude that there are enhanced corrections of the form $\mathcal{O}(1/\beta)$ to the R\'enyi entropy for $n<1$. 

\bibliographystyle{JHEP}
\bibliography{bibliography}

\end{document}